\def\@fnsymbol#1{\ensuremath{\ifcase#1\or \dagger\or \ddagger\or
   \mathsection\or \mathparagraph\or \|\or **\or \dagger\dagger
   \or \ddagger\ddagger \else\@ctrerr\fi}}
\newcommand*\samethanks[1][\value{footnote}]{\footnotemark[#1]}
\title{Fair Division via the Cake-Cutting Share}
\author {
    Yannan Bai\thanks{Computer Science Department, Duke University, Durham NC 27708-0129. Email: {\tt \{yannan.bai,munagala,yiheng.shen,ian.zhang\}@duke.edu}. This work is supported by NSF awards CCF-2113798 and IIS-2402823.} \and
    Kamesh Munagala\samethanks[1] \and
    Yiheng Shen\samethanks[1] \and
    Ian Zhang\samethanks[1]
}
\date{}
\newcommand{\E}{\mathbb{E}}
\newcommand{\np}{\mathcal{P}}
\newcommand{\na}{\mathcal{A}}
\renewcommand{\ne}{\mathcal{E}}
\renewcommand{\ni}{\mathcal{I}}
\newtheorem{theorem}{Theorem}
\newtheorem{lemma}[theorem]{Lemma}
\newtheorem{definition}[theorem]{Definition}
\newcommand{\fCCS}{\mathcal{F}^{(\CCS)}} 
\newcommand{\fEFS}{\mathcal{F}^{(\EFS)}} 
\newcommand{\fEF}{\mathcal{F}^{(\EF)}} 
\newcommand{\pfEFS}{\mathcal{F}^{(\pEFS)}} 
\newcommand{\eEFS}{\mathcal{E}^{(\EFS)}} 
\newcommand{\CCS}{\mathrm{CCS}}
\newcommand{\EFS}{\mathrm{EFS}}
\newcommand{\pEFS}{\mathrm{EFS}^{\Delta}}
\newcommand{\EF}{\mathrm{EF}}
\newcommand{\PROP}{\mathrm{PROP}}
\begin{document}
\maketitle

\begin{abstract}
In this paper, we consider the classic fair division problem of allocating $m$ divisible items to $n$ agents with linear valuations over the items. We define novel notions of fair shares from the perspective of individual agents via the cake-cutting process. These shares generalize the notion of proportionality by taking into account the valuations of other agents via constraints capturing envy.  We study what fraction (approximation) of these shares are achievable in the worst case, and present tight and non-trivial approximation bounds as a function of $n$ and $m$. In particular, we show a tight approximation bound of $\Theta(\sqrt{n})$ for various notions of such shares. We show this bound via a novel application of dual fitting, which may be of independent interest. We also present a bound of $O(m^{2/3})$ for a strict notion of share, with an almost matching lower bound. We further develop weaker notions of shares whose approximation bounds interpolate smoothly between proportionality and the shares described above. We finally present empirical results showing that our definitions lead to more reasonable shares than the standard fair share notion of proportionality.
\end{abstract}

\section{Introduction}
Fair division of divisible resources among participants is a classic problem, dating back to \citet{Steinhaus} from the 1940s. In this problem, there are $m$ divisible resources, and $n$ agents with linear utility functions over these resources. The goal is to allocate the resources to the agents in a fair fashion. Several notions of fairness have been proposed in the literature, the very first of which is {\em proportionality}: Consider the equitable split, where each resource is split equally among all agents. Then in any fair division, each agent should get at least as much utility as in an equitable split. In a sense, proportionality is the ``bare minimum'' notion of fairness, and typical fair allocations guarantee other properties in addition to proportionality.

A different view of proportionality is via  ``cake cutting''~\cite{Steinhaus}, made explicit by \citet{Budish}: Imagine a participant $i$ does not know the utility functions of other participants, and needs to split the resources into $n$ bundles.
After agent $i$ has split the resources, the other participants go first and choose one bundle each, with participant $i$ getting the last pick. Suppose participant $i$ wishes to maximize its utility in the worst case over the utility functions of the other agents. Then its optimal strategy is indeed to construct each bundle as having $1/n$ share of each resource. The max-min value of this game therefore defines a fair share for participant $i$, and coincides with its utility in a proportional allocation. 

A desirable property of the proportional share is that it is not only achievable by itself, but also achievable in combination with other fairness properties such as envy-freeness (see the ``Nash welfare'' allocation~\cite{Nash,varian}). On the other hand, the value of the fair share can be very small. As an example, there are $n$ agents and $m = n$ resources, where agent $i$ has non-zero utility only for resource $i$. The proportional allocation only assigns $1/n$ fraction of resource $i$ to agent $i$, while any ``reasonable'' allocation should allocate all of resource $i$ to agent $i$. Therefore, the proportional share is a factor $n$ smaller than what one would imagine would be a reasonable fair share. 

\subsection{Cake-cutting Share} \label{sec:ccs_intro}
The above problem arises because a participant is computing the max-min shares without knowledge of other participants' utility functions. In this paper, we seek to define such max-min shares that take into account the utility functions of other participants. We do so via a slight modification of the game described above. As before, participant $i$ splits the resources into $n$ bundles, and as before, the other agents go first in choosing their bundles. However, agent $i$ now knows the utility functions of the other agents, and creates the bundles so that every other agent $j$ likes agent $i$'s bundle at most as much as any of the remaining $n-1$ bundles. This ensures that regardless of how the other agents go first and choose bundles, agent $i$'s bundle is not chosen by them. As before, participant $i$ seeks the partition that maximizes the utility they gets from its own bundle, and this utility represents its {\em cake-cutting share}. 

Note that in the above example where the agents have value for disjoint resources, agent $i$ can create $n$ bundles, where it puts the $i^{th}$ resource in a special bundle by itself, and splits the remaining resources equally among the remaining $n-1$ bundles. Any other agent strictly prefers any of the latter bundles to the special bundle, so the cake-cutting share of agent $i$ is equal to its utility for resource $i$. On this instance, the cake-cutting share obtains a factor $n$ larger utility for each agent than the proportional share.

It is relatively easy to show that the cake-cutting share of agent $i$ is equivalent to the following natural process: Agent $i$ finds a bundle $A_i$ that achieves the largest utility for itself, subject to the condition that any other agent weakly prefers the proportional allocation to bundle $A_i$ (see Lemma~\ref{lem:ccs_compute}).  Since one such bundle is the proportional allocation, the cake-cutting share is at least the proportional share for all agents. In that sense, the cake-cutting share is a {\em relaxation} of the proportional share.

\subsection{Envy-free Polyhedra and Envy-free Share} \label{sec:efs_intro}
We can generalize the game-theoretic view of fair shares further and define a class of fair shares as follows. As before, participant $i$ splits the resources into $n$ bundles. However, in addition to splitting the items, agent $i$ also {\em assigns} the bundles to the agents (including $i$ itself). The assignment can be characterized by a partition $\na = (A_1, A_2, \ldots, A_n)$, where $A_i$ represents the bundle assigned to agent $i$. For each resource, the total amount of the resource is at most its supply. We require $i$ to split the items in an envy-free way, so that $\na$ lies in the ``envy-free polyhedron'' (denoted by $\mathcal{E}_j$) of any other agent $j \neq i$. Different definitions of shares follow from different settings of the envy-free polyhedra.

In the cake-cutting share described before, any other agent $j$ values any other bundle no less than agent $i$'s bundle. This corresponds to the envy-free polyhedron $\mathcal{E}_j^{(\CCS)} = \{\na \mid u_j(A_i) \le \min_{j' \in [n]} u_j(A_{j'}) \}$, where $u_j(A)$ is the linear utility of agent $j$ for bundle $A$. Note that stricter limitations on the envy-free polyhedrons will generally lead to smaller share values. 

We also introduce a weaker instantiation of the envy-free polyhedron: $\ne_{j}^{(\EFS)} = \{\na \mid u_j(A_i) \le u_j(A_j)\}$.
This polyhedron corresponds to the allocations where  agent $j$ has no envy specifically towards agent $i$. The maximum utility that $i$ can generate for itself from any valid assignment $\na$ defines its \emph{envy-free share} ($\EFS$).

Since $\mathcal{E}_j^{(\CCS)} \subseteq \mathcal{E}_j^{(\EFS)}$ (i.e. the polyhedra for the envy-free share are a superset of those for the cake-cutting share), the envy-free share is at least the cake-cutting share. In addition, the envy-free share addresses an issue with the cake-cutting share. Consider the example of agents having values for disjoint resources from above. In this example, agent $i$'s cake-cutting share is its value for the entire resource, and hence a factor $n$ larger than the proportional share. However, if we create a copy of agent $i$, then $i$'s bundle must give this copy at most the proportional utility (see Lemma~\ref{lem:ccs_compute}), which means the cake-cutting share of $i$ drops by a factor of $n$. The envy-free share fixes this aspect -- agent $i$ can create a bundle where it assigns half the resource to its copy, so that its envy-free share only drops by a factor of $2$ when the copy is introduced. 

\subsection{Intermediate Notions of Fair Share}
Two intermediate notions of fair shares arise naturally from the above definitions.
The first notion interpolates between $\CCS$ and $\EFS$. Here, we constrain the partition $\na$ to be an \emph{envy-free} allocation across agents. In other words, all the agents will accept agent $i$'s partition $\na$ if and only if they do not envy each other (including agent $i$). Formally, the envy-free polyhedron of agent $j$ is 
$\ne_j^{(\EF)} = \{\na \mid u_j(A_j) \ge u_{j}(A_{j'}), \forall \, j' \neq j \}$. We show in Lemma~\ref{lem:order} that each agent's share for the above polyhedron is at least the cake-cutting share and at most the envy-free share, and that all these shares are at least the proportional share.

A second and more interesting intermediate notion interpolates between proportionality and $\EFS$. In $\pEFS$, when computing its $\EFS$ share, each agent $i$ knows the utility functions of all other agents except a uniformly randomly chosen set $S_{\Delta}$ of $\lfloor \frac{n - 1}{\Delta} \rfloor$ agents. For this set $S_{\Delta}$ of agents, it assumes the utility function is adversarial, so agent $i$ assigns the same bundle to these agents as it does to itself to ensure that even in the worst case over the choice of utility function, these agents do not envy agent $i$.
An agent's $\pEFS$ share is the expected value of its $\EFS$ share under this process, where the expectation is over the random choice of $S_{\Delta}$.
When $\Delta \ge n$, this reduces to $\EFS$, and when $\Delta = 1$, this reduces to proportionality. This notion therefore smoothly interpolates between proportionality at one extreme and $\EFS$ on another.

\subsection{Approximating Fair Shares}
Our goal in this paper is to study the cake-cutting and envy-free shares as novel fair share concepts. The key question we seek to answer is whether for either the cake-cutting or the envy-free shares, there is a fair division where the fair share value is simultaneously achievable for all agents, and if not, what factor of the agents' fair share is simultaneously achievable via a fair division. We study this via the notion of {\em approximation} -- in the worst case over all problem instances, what factor of this share is achievable simultaneously for all agents? 
In other words, we say that an $\alpha$ approximation is achievable for $\alpha \ge 1$ if there is a partition of the resources into $n$ bundles (one for each agent) such that each agent $i$'s utility from its bundle is at least a $1 / \alpha$ factor of its fair share. We define $\alpha(n,m)$ as the smallest achievable $\alpha$ for {\em all} instances with $m$ resources and $n$ agents.


Note that for any reasonable notion of fair share, an agent’s share value is at most its utility for the entire set of items.
Thus we must have $\alpha(n,m) \le n$ since the proportional allocation guarantees each agent a $1/n$ fraction of its total utility for the set of items.
Therefore, an approximation factor $\alpha(n,m)$ is {\em non-trivial} only if $\alpha(n,m) = o(n)$. Note that the proportional share itself satisfies $\alpha(n,m) = 1$, and is hence non-trivial. However, as discussed above, this share may be too small in many instances. We ask: {\em Does the cake-cutting share or the envy-free share also admit a non-trivial approximation factor?} 

As an aside, we note that for any problem instance, the optimal approximation to the cake-cutting (resp. envy-free) share for this instance is computable via linear programming. (See Section~\ref{sec:efs_def}.) 
Our goal is different -- we wish to find a uniform bound on $\alpha(n,m)$ as a function of $m$ and $n$.

\subsection{Our Results}
Our main result is to show that the above defined shares are interesting and non-trivial. We show this by demonstrating non-trivial bounds on  $\alpha(n,m)$,  meaning that all agents can indeed {\em simultaneously} achieve a non-trivial (that is $\omega(1/n)$) fraction of their cake-cutting (resp. envy-free) share. 

We first show the following theorem, where we use $\alpha(n, \cdot)$ to denote the worst case approximation over all instances with $n$ agents, and an arbitrary number of items:


\begin{theorem}[Proved in Section~\ref{sec:main1}]
\label{thm:main1}
For the envy-free share (and hence the cake-cutting share), $\alpha(n,\cdot) =  O(\sqrt{n})$.  
\end{theorem}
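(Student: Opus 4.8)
The plan is to reformulate the approximation guarantee as a linear program, dualize it, and then certify the resulting dual lower bound by ``fitting'' dual solutions to the envy-free-share programs of the individual agents. Write $v_{ig}$ for agent $i$'s value for a unit of item $g$, so that $u_i(A)=\sum_g v_{ig}x_g$ for a bundle holding fraction $x_g$ of item $g$, and let $s_i$ denote the envy-free share $\EFS$ of agent $i$; after rescaling each agent's utility I may assume $s_i=1$ for all $i$. Achieving approximation $\alpha$ means finding $x_{ig}\ge 0$ with $\sum_i x_{ig}\le 1$ for every item and $\sum_g v_{ig}x_{ig}\ge 1/\alpha$ for every agent, so the best $\alpha$ on a fixed instance is $1/\beta^\star$, where $\beta^\star$ maximizes $\min_i u_i(B_i)$ over feasible allocations $(B_1,\dots,B_n)$. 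Dualizing this max--min LP (equivalently, invoking the minimax theorem) with an agent weight $\lambda_i\ge 0$ and an item price $\mu_g\ge 0$ gives $\min\sum_g\mu_g$ subject to $\mu_g\ge\lambda_i v_{ig}$ and $\sum_i\lambda_i=1$, so that
\[
\beta^\star \;=\; \min_{\lambda\ge 0,\ \sum_i\lambda_i=1}\ \sum_g \max_i \lambda_i v_{ig}.
\]
Thus \Cref{thm:main1} reduces to the structural claim that for \emph{every} $\lambda$ on the simplex, $\Phi(\lambda):=\sum_g\max_i\lambda_i v_{ig}=\Omega(1/\sqrt n)$. The easy bound $\Phi(\lambda)\ge \lambda_i\sum_g v_{ig}\ge\max_i\lambda_i$ already disposes of every $\lambda$ placing weight $\Omega(1/\sqrt n)$ on some agent, so the entire difficulty lives in the regime where $\lambda$ is spread out.

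The second ingredient, and the point where dual fitting enters, is a family of upper bounds on each share $s_i$. I would write the $\EFS$-defining program for agent $i$ as the LP maximizing $u_i(A_i)$ over partitions $\na$ subject to the envy constraints $u_j(A_i)\le u_j(A_j)$ for $j\ne i$ and the supply constraints, and take its dual. Placing multipliers $\pi^{(i)}_j\ge 0$ on the envy constraints, weak duality yields
\[
s_i \;\le\; \sum_g \max\Big( v_{ig},\ \max_{j\ne i}\pi^{(i)}_j v_{jg}\Big)\qquad\text{for every }\pi^{(i)}\ge 0 .
\]
The freedom to choose the $\pi^{(i)}$ is the ``fitting'' degree of freedom: small multipliers let agent $i$ keep its own items, large multipliers concede contested items to the agents who value them most.

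Fixing the adversarial $\lambda$ and setting $c_i=\lambda_i/s_i$, I would fit $\pi^{(i)}_j=\theta\, c_j/c_i$ with a threshold $\theta\asymp 1/\sqrt n$, split each agent's items into those it \emph{dominates} (own value at least a $\theta$ fraction of the weighted maximum $\max_j c_j v_{jg}$) and those it \emph{concedes}, and combine the resulting inequalities weighted by $\lambda_i$. Since $\sum_g\max_j c_j v_{jg}=\Phi(\lambda)$, the conceded items contribute at most $\theta\,\Phi(\lambda)$ per agent, hence at most $n\theta\,\Phi(\lambda)$ in total; meanwhile the proportional floor $s_i\ge u_i(\text{all})/n\ge v_{ig}/n$, which gives $\sum_i c_i v_{ig}\le n$ for each item, is meant to control the mass carried by the dominated items. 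Balancing these two error terms against $\theta\asymp 1/\sqrt n$ is intended to force $\Phi(\lambda)=\Omega(1/\sqrt n)$ for every $\lambda$, closing the reduction.

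I expect this last balancing to be the crux; the LP-duality reductions above are routine by comparison. The clean obstruction is an item valued moderately but simultaneously by many agents: such an item can carry a large dominated contribution while adding almost nothing to $\Phi(\lambda)$, so a naive charging of the dominated mass against $\Phi(\lambda)$ loses a factor of $n$ rather than $\sqrt n$. Making the argument work therefore requires choosing the multipliers $\pi^{(i)}$ and the split threshold so that the proportional floor is invoked tightly precisely on these shared items, and then verifying that the conceded-mass term $n\theta\,\Phi(\lambda)$ and the dominated-mass term genuinely cross at the $\sqrt n$ scale. Getting this crossing right is where the real content of the $O(\sqrt n)$ bound resides, and is presumably exactly what the matching lower bound certifies cannot be improved.
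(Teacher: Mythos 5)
Your two reductions are sound: dualizing the max--min allocation LP to get $\beta^\star=\min_\lambda\Phi(\lambda)$ after normalizing $s_i=1$ is exactly equivalent to the paper's Step~1 (Theorem~\ref{thm1}: strong duality plus the scale invariance of Lemma~\ref{lem:scale_invar}), and per-agent dual bounds on $\EFS_i$ are the right vehicle (the paper's intermediate reduction to binary utilities is a convenience your route does not need). The gap is in the inequality you actually wrote down. The dual of the $\EFS_i$ LP prices item $g$ at
\[
\mu_g \;\ge\; \max\Bigl(\ \max_{j\ne i}\pi^{(i)}_j v_{jg},\ \ v_{ig}-\sum_{j\ne i}\pi^{(i)}_j v_{jg},\ \ 0\Bigr),
\]
i.e.\ agent $i$'s own value enters only as a \emph{residual} after subtracting the envy payments $\sum_{j\ne i}\pi^{(i)}_j v_{jg}$. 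Your bound replaces this residual by $v_{ig}$ itself. That weakening is fatal: your right-hand side is then at least $\sum_g v_{ig}=u_i([m])$ for every choice of $\pi^{(i)}$, so it can never certify a share below agent $i$'s full-bundle utility. On the instance where all agents are identical (so $\EFS_i=u_i([m])/n$) it overshoots by a factor of $n$, and your weighted combination then yields only $\Phi(\lambda)=\Omega(1/n)$, the trivial bound. This is precisely the ``shared item'' obstruction you flag in your last paragraph, and no tuning of $\theta$ or of the $\pi^{(i)}$ can repair it while the subtraction is missing.

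With the subtraction restored, your own multipliers finish the proof. Take $\pi^{(i)}_j=\theta\lambda_j/\lambda_i$, multiply agent $i$'s bound by $\lambda_i$, and sum over $i$: writing $a_i=\lambda_i v_{ig}$, $T_g=\sum_j a_j$, $M_g=\max_j a_j$, the conceded terms contribute at most $n\theta\,\Phi(\lambda)$ in total, while the residual term of agent $i$ on item $g$ becomes $\max\bigl((1+\theta)a_i-\theta T_g,\,0\bigr)$; positivity forces $a_i>\tfrac{\theta}{1+\theta}T_g$, so at most $(1+\theta)/\theta$ agents per item have positive residuals, each at most $(1+\theta)M_g$, for a total of at most $\tfrac{(1+\theta)^2}{\theta}\,\Phi(\lambda)$. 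Hence $1=\sum_i\lambda_i\le\bigl(n\theta+\tfrac{(1+\theta)^2}{\theta}\bigr)\Phi(\lambda)$, and $\theta=1/\sqrt{n}$ gives $\Phi(\lambda)=\Omega(1/\sqrt{n})$; no delicate case analysis on shared items is needed, because the subtraction makes them cheap automatically. This residual is exactly what the paper's explicit dual solution in Theorem~\ref{thm3} encodes: for an item shared by $q+1$ agents it sets $\beta_{iS}=\gamma+\max(0,\,1-q\gamma)$ with $\gamma=1/\sqrt{n}$, and the $1-q\gamma$ term is your residual after paying $\gamma$ to each of the other $q$ sharers.
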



In other words, for {\em any} instance with $n$ agents and $m$ items, it is possible to find an allocation of the resources to the agents where each agent obtains utility  $\Omega\left(\frac{1}{\sqrt{n}}\right)$ times its cake-cutting (resp. envy-free) share. 
We next show there are instances where this bound is tight, so that it is not possible for all agents to achieve a better fraction of their cake-cutting share. 

\begin{theorem}[Proved in Section~\ref{sec:lb1}]
\label{thm:main11}
For the cake-cutting share (and hence the envy-free share), $\alpha(n,\cdot) =  \Omega(\sqrt{n})$.  
\end{theorem}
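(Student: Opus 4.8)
The plan is to exhibit a single family of instances, indexed by $n$, on which no allocation can simultaneously guarantee every agent more than an $O(1/\sqrt{n})$ fraction of its cake-cutting share; since $\EFS_i \ge \CCS_i$ for every agent, the same instances witness the bound for the envy-free share as well (the target $\frac{1}{\alpha}\EFS_i$ is only larger, so the achievable fraction can only be smaller). Concretely, I would (i) describe the valuations explicitly, (ii) lower bound each agent's $\CCS_i$ by writing down one feasible bundle for the program of Lemma~\ref{lem:ccs_compute} --- that is, a bundle $A_i$ with $u_j(A_i)\le u_j(\mathrm{PROP})$ for all $j\neq i$ and $u_i(A_i)$ large --- and (iii) upper bound the best simultaneously achievable fraction $\beta^\star$ by producing a dual/weighting certificate for the max--min allocation LP: nonnegative agent weights $w_i$ and item prices $p_g$ for which the total priced supply $\sum_g p_g$ is an $O(1/\sqrt n)$ factor of $\sum_i w_i\,\CCS_i$, which forces $\beta^\star\le O(1/\sqrt n)$ and hence $\alpha(n,\cdot)=\Omega(\sqrt n)$.

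The delicate part is step (i): designing valuations that make this work. A first observation that guides the search is that fully symmetric, ``balanced'' instances are useless here. If the instance is vertex-transitive on the agents and every item is valued by the same number of agents with equal pairwise intersections, then the symmetric (equal-split) allocation already delivers each agent a constant fraction of its $\CCS$; in fact for a balanced $d$-regular set system the cake-cutting bundle over-grabs each shared item only by the ``$+1$'' factor $\frac{d}{d-1}\le 2$ relative to its fair split, so $\alpha=O(1)$. The lower bound therefore has to come from a genuinely \emph{asymmetric} instance, and moreover from valuations with (at least) two very different magnitude scales: one needs some agents whose cake-cutting bundles over-grab a common, scarce set of resources by a $\sqrt n$ factor, where this over-grab is licensed because the \emph{other} valuers of those resources hold large valuations elsewhere, loosening their envy (proportionality) constraints.

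Accordingly, the construction I would build has a small pool of $\Theta(\sqrt n)$ contested resources together with $\Theta(n)$ agents, and I would choose the two value scales so that (a) each agent's cake-cutting share is $\Theta(1)$ --- realized by over-grabbing the contested pool, which is permitted because the competing agents' proportional budgets are large --- while (b) the total value physically available from the contested pool is only $\Theta(\sqrt n)$. Then the total cake-cutting demand, $\Theta(n)$, exceeds the available supply by a $\Theta(\sqrt n)$ factor, and the dual certificate of step (iii) (placing weight on the over-demanded pool) shows this gap cannot be rebalanced: some agent is necessarily left with an $O(1/\sqrt n)$ fraction. The numeric valuations and the exact sizes must be tuned so that the per-agent $\CCS$ computation really returns $\Theta(1)$ and the supply really caps at $\Theta(\sqrt n)$ simultaneously.

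I expect step (i) to be the main obstacle, because of an inherent tension that defeats naive attempts. For a $\sqrt n$ lower bound one genuinely needs $\CCS_i \gg \sqrt n\cdot \mathrm{PROP}_i$ for the starved agents: the proportional allocation is always feasible, so if $\CCS_i\le \sqrt n\cdot \mathrm{PROP}_i$ held for all $i$, the proportional allocation would already give everyone a $1/\sqrt n$ fraction. But making $\CCS_i$ super-proportional requires over-grabbing items, which in turn requires the rival valuers of those items to have loose envy constraints, i.e.\ large total valuations; and large valuations tend either to make those rivals satisfiable from \emph{other} items (so the contested items are freely ceded and no starvation occurs) or to inflate the rivals' own proportional shares (so the total-value bound washes the gap out). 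Threading this needle --- keeping the cake-cutting share genuinely super-proportional while the instance stays globally un-balanceable --- is where the $\sqrt n$ scale is forced, and I would resolve it with the deliberately two-scale, asymmetric design above, verifying the matching dual certificate so as to meet the $O(\sqrt n)$ upper bound of Theorem~\ref{thm:main1} exactly.
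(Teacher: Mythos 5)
Your scaffolding in steps (ii) and (iii) does match the paper's proof: the paper also lower-bounds each $\CCS_i$ by exhibiting a feasible solution to the LP of Lemma~\ref{lem:ccs_compute}, and your dual certificate in step (iii), instantiated with uniform weights $w_i = 1$ and prices $p_g = \max_i v_{ig}$, is exactly the paper's (much simpler) averaging argument: any allocation has total utility at most $SW(\ni) = \sum_k \max_i v_{ik}$, so it suffices to exhibit an instance with $\sum_i \CCS_i = \Omega(\sqrt{n}) \cdot SW(\ni)$. The gap is step (i), which you correctly identify as the crux but never carry out --- and the two design principles you commit to for finding the construction are both wrong, so they would steer the search away from any instance that works. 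First, your claim that balanced, vertex-transitive instances are useless (over-grab factor at most $d/(d-1) \le 2$, hence $\alpha = O(1)$) is false. The paper's instance is a finite projective plane of order $q$ (agents are lines, contested items are points, $n = q^2+q+1$) together with $n-q-1$ ``universal'' items valued by everyone --- about as symmetric and balanced as an instance can be. There, each agent may grab all $q+1$ of its contested items \emph{in their entirety} (an over-grab factor of $q+1$, not $d/(d-1)$), because any other agent $j$ intersects it in exactly one contested item, whose value equals $j$'s proportional budget $u_j([m])/n$ once the universal items inflate $u_j([m])$. On this instance the symmetric equal-split allocation gives each agent only $O(1/\sqrt{n})$ of its share: symmetry of the \emph{instance} does not make the individual cake-cutting \emph{bundles} symmetric, which is the step your argument conflates.

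Second, your dimensioning --- a contested pool $P$ of $\Theta(\sqrt{n})$ items, $n$ agents each with pool-derived $\CCS_i = \Theta(1)$, and social welfare $\Theta(\sqrt{n})$ --- is not merely delicate but provably unachievable. Since your ratio argument forces $SW(\ni) = O(\sqrt{n})$, every agent has $u_j([m]) \le SW(\ni) = O(\sqrt{n})$, so each envy constraint caps agent $i$'s per-item take at $v_{ik}A_i(k) \le u_{j^*}([m])/n = O(1/\sqrt{n})$, where $j^*$ is the maximum valuer of item $k$ (this applies whenever $j^* \ne i$, hence to all but at most $\lvert P \rvert = O(\sqrt{n})$ agents). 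To reach $\CCS_i = \Theta(1)$, each such agent must therefore draw a $\Theta(1/\sqrt{n})$-sized contribution from $\Omega(\sqrt{n})$ distinct pool items. But if two agents $i,i'$ both draw such a contribution from the same item $k$, then whichever of them values $k$ more has $\Omega(1/\sqrt{n})$ --- a constant fraction of its entire envy budget $O(1/\sqrt{n})$ --- consumed by that single item in the other's bundle; so any two agents can heavily share only $O(1)$ pool items. A Corr\'adi-type counting bound says $\Theta(n)$ subsets of size $\Omega(\sqrt{n})$ with pairwise intersections $O(1)$ need a ground set of $\Omega(n)$ elements, which cannot fit inside your $O(\sqrt{n})$-item pool. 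This is exactly why the paper's construction uses $\Theta(n)$ contested items, each valued by $q+1 = \Theta(\sqrt{n})$ agents with pairwise intersections exactly one, and it is also consistent with the paper's own upper bound in Theorem~\ref{thm:main2}: an $\Omega(\sqrt{n})$ gap requires $m^{2/3} = \Omega(\sqrt{n})$, i.e., at least about $n^{3/4}$ items carrying the construction. As written, then, the proposal is a search plan whose search space excludes the actual solution, not a proof.
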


We note that the above results imply the same bounds for any intermediate notion of fair shares where the constraints in the envy polyhedron are stricter than those for the envy-free share and weaker than those for the cake-cutting share.

We next show a similar non-trivial (that is $o(m)$) approximation for the cake cutting share as a function of $m$, the number of resources, again using the notation $\alpha(\cdot, m)$ to denote the worst case approximation for instances with $m$ items and arbitrary number of agents.

\begin{theorem}[Proved in Section~\ref{sec:main2}]
\label{thm:main2}
For the cake-cutting share, $\alpha(\cdot,m) = O(m^{2/3})$. Further, there exists an instance of the fair division problem where $\alpha( \cdot,m) = \Omega(\sqrt{m})$.
\end{theorem}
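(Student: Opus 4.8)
The plan is to reduce both bounds to a single quantity. For a fixed instance, the best simultaneously achievable fraction is the value of a linear program (maximize $\lambda$ subject to $u_i(\text{bundle}_i) \ge \lambda \cdot \CCS_i$ for all $i$ and the supply constraints $\sum_i x_{ig} \le 1$). Taking its dual, the optimal approximation for the instance equals
\[ \alpha = \max_{w \ge 0} \frac{\sum_i w_i\, \CCS_i}{\sum_g \max_i w_i v_{ig}}, \]
where $v_{ig}$ is (normalized so $\sum_g v_{ig}=1$) agent $i$'s value for item $g$, so that $\CCS_i = \max\{\sum_g v_{ig} x_g : x \in [0,1]^m,\ \sum_g v_{jg} x_g \le 1/n\ \forall j \ne i\}$ by \Cref{lem:ccs_compute}. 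Thus $\alpha(\cdot,m)$ is the supremum of this ratio $R(w)$ over all instances with $m$ items and all nonnegative weights, and both directions amount to bounding $R$.

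For the upper bound, I would first dispose of agents with $\CCS_i \le \alpha/n$: the proportional allocation already gives each of them $1/n \ge \CCS_i/\alpha$, so these contribute nothing to a violated ratio. For the remaining ``demanding'' agents I would exploit a structural bound on CCS-optimal bundles: summing the envy constraint $\sum_g v_{jg} x^i_g \le 1/n$ over all $j \ne i$ gives $\sum_g V_g x^i_g < 1 + \CCS_i \le 2$, where $V_g = \sum_j v_{jg}$ is the total value of item $g$ and $\sum_g V_g = n$. Hence every demanding bundle places only $O(1/\theta)$ total mass on ``popular'' items (those with $V_g \ge \theta$), of which there are at most $n/\theta$; the remaining value must come from ``rare'' items whose small total value $V_g < \theta$ caps how many agents can reuse them. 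Classifying items by popularity, charging each demanding agent's value against $\sum_g \max_i w_i v_{ig}$ separately on the two classes, and optimizing the threshold $\theta$ as a function of $m$ is what produces the exponent $2/3$; equivalently, one gives an explicit allocation (proportional on popular items, greedy-to-high-valuers on rare items). The delicate point, and the main obstacle, is controlling the overlap of demanding bundles on rare items using only the $m$-item budget: this is where the balance between ``few popular items per bundle'' and ``bounded reuse of rare items'' is tight, and getting $2/3$ rather than $1$ requires handling items of intermediate popularity carefully.

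For the lower bound I would exhibit a single instance together with weights $w$ for which the ratio above is $\Omega(\sqrt m)$; by the duality this certifies that no allocation beats a $\Theta(1/\sqrt m)$ fraction of the cake-cutting share for all agents simultaneously. The instance must make $\sum_i w_i \CCS_i$ large while keeping the covering cost $\sum_g \max_i w_i v_{ig}$ smaller by a factor of $\Theta(\sqrt m)$, which forces a careful combinatorial design: each item should be valued at a common moderate level by many agents (so that its single appearance in the denominator is ``reused'' across many agents' shares in the numerator), yet the per-agent envy budgets of $1/n$ must remain slack so that the $\CCS_i$ stay $\Omega(1)$. I would realize this with a design on $m$ items in which agent supports have controlled pairwise value-overlap, tuned so that the aggregate envy constraints of different agents do not simultaneously bind. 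The hard part is that these two requirements pull against each other: making the $\CCS_i$ close to $1$ forces the value vectors toward orthogonality and hence spreads the items out, which suppresses exactly the reuse that drives the ratio up. Resolving this tension pins the construction at $\Omega(\sqrt m)$ and is precisely why the lower bound falls short of the $O(m^{2/3})$ upper bound, leaving the stated gap.
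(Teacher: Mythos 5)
Your duality reformulation is correct---it is the same reduction the paper itself uses (Theorem~\ref{thm1} combined with the scale invariance of Lemma~\ref{lem:scale_invar})---but both halves of your proposal stop exactly at the step where the proof lives, and for the upper bound the step you defer is not a technicality. The obstacle you name, controlling how many distinct agents' CCS bundles reuse the same items, is not resolved by the aggregate popularity threshold $V_g \ge \theta$: a ``rare'' item can be taken in full ($x^i_g = 1$) by \emph{every} agent's CCS bundle simultaneously, since the $n$ CCS programs of Lemma~\ref{lem:ccs_compute} are independent, and neither your bound $\sum_g V_g x^i_g \le 2$ nor the count $n/\theta$ of popular items charges this reuse against the denominator $\sum_g \max_i w_i v_{ig}$; without such a charge you do not beat the trivial $O(m)$. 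The paper's proof supplies precisely the missing device, and it is not a popularity split. After normalizing each agent's total value to $1$, it thresholds \emph{per item relative to that item's maximum}, $S_i = \{k : v'_{ik} \ge a\,w_k\}$ with $w_k = \max_j v'_{jk}$, runs a greedy cover (Algorithm~\ref{alg1}) selecting at most $1/b$ agents whose sets $\tilde{S_j}$ disjointly cover at least $bm$ items each, and allocates one third of every item to its max-valuer, one third to the cover, and one third proportionally. The cover is what caps the number of ``blocking'' agents at $1/b$: any large item of agent $i$ not already paid for by the allocation is either covered by one of at most $1/b$ other agents $j$, in which case $j$'s envy constraint $\sum_k v'_{jk} x^*_k \le 1/n$ together with $v'_{ik} \le v'_{jk}/a$ bounds the total contribution by $1/(abn)$, or uncovered, in which case there are at most $bm$ such items, each worth at most $1/n$ by the max-valuer's envy constraint; items outside $S_i$ lose a factor $a$, contributing at most $am/n$. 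This yields $\CCS'_i \le 3\,ALG_i + \frac{1}{n}\left(\frac{1}{ab} + (a+b)m\right)$, and $a = b = m^{-1/3}$ gives $3m^{2/3}$. Your rare/popular classification has no analogue of the $1/b$ bound on blocking agents, which is exactly why your charging cannot be completed and the exponent $2/3$ is unsupported.

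For the lower bound, you have listed the right desiderata (moderate values shared by many agents, controlled pairwise overlap so that envy budgets stay slack, $\CCS_i$ large relative to proportional), but a proof requires exhibiting a concrete design and verifying both feasibility and the ratio, which the proposal does not do. The paper gets this essentially for free from Section~\ref{sec:lb1}: the finite projective plane instance proving Theorem~\ref{thm:main11}---lines as agents, points as items, any two agents sharing exactly one valued item, padded with $n-q-1$ universally valued items so that each agent's envy budget exactly matches the value of the single shared item---has $m = \Theta(n)$, so its $\Omega(\sqrt{n})$ lower bound on $C(\ni)/SW(\ni)$ is simultaneously an $\Omega(\sqrt{m})$ bound. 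That plane is precisely the ``controlled pairwise value-overlap'' object your sketch asks for; without naming it (or an equivalent design) and checking that the certificates $x_{iL_k}=1$ satisfy the envy constraints with equality, the lower bound remains a conjecture in your write-up rather than a proof.
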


We finally extend the proof of Theorem~\ref{thm:main1} to the $\pEFS$ share described above, where agent $i$ does not know the utility function of a randomly chosen set $S_{\Delta}$ of agents of size $\lfloor \frac{n-1}{\Delta} \rfloor$, for whom it replicates its own allocation. We show the following theorem where the approximation ratio increases smoothly in the parameter $\Delta$. 

\begin{theorem}[Proved in Section~\ref{app:partial}]
\label{thm:main3}
For the $\pEFS$ share, $\alpha(n,\cdot) =  \Theta(\sqrt{\Delta})$. The upper bound of $O(\sqrt{\Delta})$ holds for {\em any} set $S_{\Delta}$ that $i$ chooses, while in the lower bound instance, the $\Omega(\sqrt{\Delta})$ bound holds for {\em all} possible sets $S_{\Delta}$.
\end{theorem}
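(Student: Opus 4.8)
The plan is to derive both bounds from the $\EFS$ analyses of Theorems~\ref{thm:main1} and~\ref{thm:main11}, the one new ingredient being the effect of forcing the agents of $S_{\Delta}$ to copy agent $i$'s own bundle. Write $s:=\lfloor\frac{n-1}{\Delta}\rfloor=|S_{\Delta}|$. In any partition $\na$ that defines the $\pEFS$ share we must set $A_j=A_i$ for every $j\in S_{\Delta}$, so the $s+1$ identical bundles indexed by $S_{\Delta}\cup\{i\}$ together consume at most the full supply of each resource; hence $A_i$ uses at most a $\frac{1}{s+1}$ fraction of every resource, and, more precisely, agent $i$'s consumption enters the supply constraints with coefficient $s+1$ rather than $1$. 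Since $\frac{n}{s+1}=\Theta(\Delta)$ for all $1\le\Delta\le n$, this is exactly the substitution that should turn the $\EFS$ bound $\sqrt{n}$ into the $\pEFS$ bound $\sqrt{\Delta}$.

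For the upper bound, I would fix agent $i$ and an arbitrary realization of $S_{\Delta}$, and denote by $V_i$ the corresponding conditional $\pEFS$ share. The share-defining program is identical to the $\EFS$ program of Theorem~\ref{thm:main1} apart from the coefficient $s+1$ on agent $i$'s consumption noted above, so I would re-run the dual-fitting argument of Section~\ref{sec:main1} with this single modification. The cap $A_i\le\frac{1}{s+1}$ alone already gives the trivial estimate $V_i\le\frac{1}{s+1}\sum_r v_{ir}$ (with $v_{ir}$ the value of resource $r$ to $i$), i.e. at most $\Theta(\Delta)$ times the proportional share, which is always achievable and hence yields the crude factor $\Theta(\Delta)$. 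The point of the dual fit is that the envy constraints of the known agents $j\notin S_{\Delta}$ limit how many agents can simultaneously saturate a resource, improving this to its square root: I expect the factor $n$ that the trade-off of Section~\ref{sec:main1} converts into $\sqrt{n}$ to be replaced throughout by $\frac{n}{s+1}=\Theta(\Delta)$, giving $O(\sqrt{n/(s+1)})=O(\sqrt{\Delta})$. Because this derivation is uniform over the choice of $S_{\Delta}$ and the $\pEFS$ share is the average of the conditional shares over the random $S_{\Delta}$, the same allocation certifies $O(\sqrt{\Delta})$ for the expected share, establishing the clause that the bound holds for any set $S_{\Delta}$ that $i$ may pick.

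For the lower bound I would construct an instance generalizing the one behind Theorem~\ref{thm:main11} (Section~\ref{sec:lb1}), tuning its contention to $\Delta$ rather than $n$. The instance would consist of $\Theta(\Delta)$ agent-types, each replicated $\Theta(n/\Delta)$ times to fill the $n$ agents, with supplies and valuations scaled so that even under the per-resource cap $\frac{1}{s+1}$ forced by the $S_{\Delta}$-copies each agent can still secure a bundle worth $\Omega(\sqrt{\Delta})$ times the best simultaneously achievable utility (the same geometric-mean balance between the crude factor $\Theta(\Delta)$ and $1$). I would take the number of types a constant factor below $\Delta$, so that each type has strictly more than $s\approx n/\Delta$ copies; this guarantees that for every hypothetical $S_{\Delta}$ the known agents still contain a representative of every type, and hence that the $\Omega(\sqrt{\Delta})$ lower bound on each agent's conditional share holds uniformly over all $S_{\Delta}$. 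This uniformity is exactly what lets the bound survive the averaging defining the $\pEFS$ share, and since no single allocation can give every agent more than an $O(1/\sqrt{\Delta})$ fraction of its share, we conclude $\alpha(n,\cdot)=\Omega(\sqrt{\Delta})$, matching the clause that the bound holds ``for all possible sets $S_{\Delta}$''.

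The main obstacle is the upper bound's dual-fitting step: I must verify that the coefficient $s+1$ on agent $i$'s consumption propagates through the argument of Theorem~\ref{thm:main1} precisely as the replacement of $n$ by $\frac{n}{s+1}=\Theta(\Delta)$, so that the square-root trade-off yields $\sqrt{\Delta}$ and not the trivial $\Delta$ or the too-strong $\sqrt{n}$. The delicate part is the interaction between the cap and the envy constraints of the now-fewer known agents, since the cap by itself only certifies the factor $\Theta(\Delta)$ and the improvement to $\sqrt{\Delta}$ must come entirely from those constraints limiting which agents can simultaneously attain the per-resource maximum in the dual fit. A secondary obstacle, on the lower-bound side, is making the construction robust to deleting any $s$ agents while still realizing the geometric-mean gap for every $S_{\Delta}$; this robustness is what forces the number of types to sit a constant factor below $\Delta$.
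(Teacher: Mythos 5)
Your upper bound is essentially the paper's own argument. The paper writes exactly the LP you describe (envy constraints only for the known agents $j \notin Z_i := W_i \cup \{i\}$, and supply constraints $\sum_{j \notin Z_i} x_{jk} + |Z_i|\cdot x_{ik} \le 1$), notes that the reductions of Theorems~\ref{thm1} and~\ref{thm2} apply verbatim, and re-runs the dual fitting; it also proves the bound for an arbitrary (indeed the agent-optimal) choice of $W_i$, which dominates the expectation, exactly as you argue. One warning about the step you flag as your main obstacle: the \emph{final bound} is indeed $\sqrt{n} \mapsto \sqrt{n/(s+1)}$, but the dual variables are \emph{not} obtained by substituting $n \mapsto n/(s+1)$ into the solution of Section~\ref{sec:main1}. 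Writing $Z = s+1$, the dual constraint coming from $z_{iiS}$ becomes $Z\beta_{iS} + \sum_{j \notin Z_i,\, j \in S}\eta_{ij} \ge 1$, so the surplus term in $\beta_{iS}$ is divided by $Z$; the correct setting is $\eta_{ij} = \gamma = 1/\sqrt{nZ}$ (\emph{smaller} than $1/\sqrt{n}$, not larger) and $\beta_{iS} = \gamma + \max\{0, (1-\gamma q_{iS})/Z\}$ with $q_{iS} = |S\setminus Z_i|$; balancing $n\gamma$ against the surplus maximum $\approx 1/(4\gamma Z)$ yields $\sum_i \beta_{iS} \le 2\sqrt{n/Z}$. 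The naive substitution $\gamma = \sqrt{Z/n}$ fails outright, since then the term $n\gamma = \sqrt{nZ}$ alone already exceeds the target by a factor of $Z$.

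Your lower bound takes a genuinely different route. The paper does not replicate its projective-plane instance; it uses a fresh symmetric construction with one unit-valued item per subset $S$ of $\ell = \sqrt{nZ}/2$ agents ($v_{iS}=1$ iff $i \in S$). Symmetry makes the ``for all $W_i$'' clause automatic: splitting items containing $i$ equally among $Z_i$ and all other items equally among their $\ell$ owners is feasible and envy-free toward $i$ for \emph{every} $W_i$, giving $\pEFS_i \ge \binom{n-1}{\ell-1}/Z$ and hence $\sum_i \pEFS_i \ge \sqrt{n/(4Z)}\binom{n}{\ell}$ against welfare $\binom{n}{\ell}$. Your replicated plane with $\Theta(\Delta)$ types, each having $r = \Theta(n/\Delta)$ copies, does also work: an agent can take an $x = \Theta(1/(r+s))$ fraction of each of its $\Theta(\sqrt{\Delta})$ points, feed the known copies of its own type from the leftover supply of those points, and feed every other type from its own points split equally among its copies; this yields a conditional share of $\Theta(\Delta^{3/2}/n)$ for every $W_i$, hence ratio $\Theta(\sqrt{\Delta})$, and it makes transparent why the number of types must be tuned to $\Delta$ (with $n$ types and no replication the same instance only gives $\Omega(\Delta/\sqrt{n})$). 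However, your stated robustness requirement --- that every type retain a known representative under every $S_\Delta$ --- is not what the argument needs: removing a type from the known set only \emph{deletes} envy constraints and therefore weakly \emph{increases} the conditional share, so the binding case is precisely when all types are represented, and your feeding argument must (and does) handle arbitrary $W_i$ in any event. The ``constant factor below $\Delta$'' adjustment is thus harmless but unnecessary. In exchange for being less self-contained, the paper's construction buys cleanliness and automatic uniformity over $W_i$; yours buys continuity with Theorem~\ref{thm:main11} and an explicit view of how contention scales, and indeed both designs end up with each item valued by $\Theta(n/\sqrt{\Delta})$ agents.
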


Note that in agent $i$'s partition, agent $i$ values its own bundle at most as much as any bundle assigned to the agents in $S_\Delta$.
Thus, any agent's $\pEFS_i$ share is at most a $1/(1 + \lvert S_\Delta \rvert)$ factor of its utility for the entire set of items. This implies that any agent's $\pEFS_i$ share is at most a $\Delta$ factor larger than its proportional share, which gives an easy upper bound of $\Delta$ for $\alpha(n, \cdot)$. The above theorem improves this to $O(\sqrt{\Delta})$.


\paragraph{Techniques.} The main technical highlight is the proof of the upper bounds on approximation. To show that a $O(\sqrt{n})$ approximation is always achievable, we use the idea of dual fitting, where we write the worst case approximation over all problem instances as a linear program with exponentially many variables. We then show a feasible solution to the dual of this program with value $O(\sqrt{n})$. This linear program is non-trivial, and the techniques used for writing it may find other applications to fair division and beyond. Indeed, we find the lower bound instances by solving this linear program for small values of $n$.  The proof of the $O(m^{2/3})$ upper bound on approximation is via a greedy set cover type algorithm, showcasing an interesting connection between the cake-cutting share and set cover. 




\paragraph{Empirical Results.} Given that various versions of the envy polyhedron yield comparable worst-case approximation bounds, a natural question is what notion would be reasonable in practice. Towards this end, we compute the approximations on simulated and real problem instances in Section~\ref{sec:expt} and show that the cake-cutting share --  the strictest envy polyhedron -- comes closest to having an approximation factor of one, meaning these shares are realizable via an allocation. We further show that the approximation factor of $\pEFS$ smoothly decreases with $\Delta$.
These results complement our theoretical bounds, and serve as another justification for developing these novel definitions of shares.

\subsection{Related Work}
\paragraph{Proportionality and Fair Allocation.}
The concept of {\em proportionality}  in fair division was first introduced by \citet{Steinhaus}. This foundational work addressed the problem of dividing a heterogeneous resource (the ``cake'') among several participants in such a way that each participant perceives their share as fair, where one notion of fairness presented was  proportionality.
A proportional allocation for divisible items is easy to achieve by evenly splitting every item; however its generalization to groups of agents has been studied for shared resources \cite{FainGM17}.

\paragraph{Fair Shares for Indivisible Goods.} 
 We refer readers to \cite{mishra2023fair,amanatidis2022fair} for detailed surveys of recent results on proportionality and other fair allocations of indivisible items. A natural approach for indivisible items is to approximate proportionality, and several works \cite{conitzer2017fair,aziz2020polynomial,baklanov2021achieving,baklanov2021propm} have considered such approximate notions.

The work of \citet{Babaioff} defines the general notion of a \emph{share} that is computed for each agent. The most popular notion of a share is the maximin share (MMS) introduced by \citet{Budish}, which calculates each agent's share using the cutting and choosing game described previously. Although allocations that achieve exact MMS may not exist for indivisible goods \cite{procaccia2014fair}, constant approximations to this share can be achieved \cite{akrami2024breaking,babaioff2022best,hosseini2022ordinal,akrami2023improving}. 
Recently, \citet{babichenko2024fair} propose the {\em quantile share}, where an agent wants its share to be better than a uniformly random allocation with constant probability. 

All the above notions compute agent $i$'s share solely as a function of the number of agents $n$, and agent $i$'s personal utility function $v_i$. As such, for divisible goods, they all reduce to proportionality. As far as we are aware, our work is the first to define and study fair shares that are cognizant of the utility functions of other agents, therefore moving beyond proportionality even for the divisible goods setting.

\paragraph{Envy-freeness and Cake Cutting}
Envy-freeness -- where no agent envies the allocation of another agent -- is more demanding than proportionality and was first formally introduced by \citet{gamow1958puzzle}. Envy-free allocations have been widely studied for both divisible and indivisible goods; see \cite{varian1973equity,varian,Caragiannis2019nash,lipton,aziz2016discrete} and the citations therein. Our proposed shares are related in spirit to  envy-free allocations, but differ in the details, since envy in our context is with regard to the agent computing its share. We note that to the best of our knowledge, our work is the first to define fair shares (or guarantees of utility) based on envy-freeness. 

\section{Preliminaries}
In an instance of the fair division problem, there is a set $N$ of $n$ agents (participants), and a set $M$ of $m$ homogeneous divisible items or resources to be allocated. Without loss of generality, we assume $ N = [n] = \{1, \ldots, n\}$ and $M = [m] = \{1, \ldots, m\}$. For each agent $i$, its valuation for a single item is described by a function $v_i \colon M \to \mathbb{R}_{\ge 0}$, where $v_{ik}$ is agent $i$'s valuation for the entirety of item $k$.
We will let $V = (v_1, \dots, v_n)$ be the vector of all valuation functions. Therefore, a problem instance in our fair-division model is represented by a triple $\ni = \langle [n], [m], V \rangle$. 

A bundle $B$ is described by a vector $(x_1, \dots, x_m) \in [0, 1]^m$, which denotes that for all $k \in [m]$, bundle $B$ contains $x_k$ quantity of item $k$. 
We write $B(k) = x_k$ to describe the quantity of item $k$ in the bundle $B$.
We will assume that the utilities are additive and each item is homogeneous, so agent $i$'s utility for a bundle $B$ is given by 
$u_i(B) = \sum_{k \in [m]} v_{ik} \cdot B(k)$.

An allocation $\na = (A_1, \dots, A_n)$ is a tuple of $n$ bundles such that at most the unit supply of each item is allocated, i.e., $\sum_{i \in [n]} A_i(k) \le 1$ for all $k \in [m]$.
Each agent $i$ receives bundle $A_i$ with utility $u_i(A_i)$.
We denote the set of all allocations of $m$ items among $n$ agents by $\np(n, m)$.

\paragraph{Cake-Cutting (CCS) and Envy-Free (EFS) Share.}
 \label{sec:efs_def}
Fix a problem instance $\ni = \langle [n], [m], V \rangle$. We first define the cake-cutting share of each agent. For any agent $i$, we use
$\fCCS_i(\ni) = \{\na \in \np(n, m) \mid u_j(A_i) \le u_j(A_{j'}), \  \forall j, j' \in [n] \setminus \{i\}\}$
to denote the set of all feasible allocations by agent $i$, such that every other agent $j \in [n] \setminus \{i\}$ weakly prefers any other bundle $A_{j'}$ to bundle $A_i$. Recall that in Section~\ref{sec:efs_intro}, we defined the envy-free polyhedron of agent $j$ in the cake-cutting share to be $\mathcal{E}_j^{(\CCS)}(\ni) = \{\na \in \np(n, m) \mid u_j(A_i) \le \min_{j' \in [n]} u_j(A_{j'}) \}.$ One can verify that the space of all feasible allocations by agent $i$ is exactly the intersection of all other agents' envy-free polyhedra.
\begin{definition} [Cake-cutting Share]
    For any agent $i \in [n]$, we define the \emph{cake-cutting share} of agent $i$ to be \[\CCS_i(\ni) = \max_{(A_1, \dots, A_n) \in \fCCS_i(\ni)}u_i(A_i).\]
\end{definition}

We now formally define the envy-free share, which strengthens the cake cutting share. For any agent $i$, we use 
$\fEFS_i(\ni) = \{\na \in \np(n, m) \mid u_j(A_{i}) \le u_j(A_j), \ \forall j \in [n] \setminus \{i\}\}$
to denote the set of allocations such that every other agent $j \in ([n] \setminus \{i\})$ weakly prefers its bundle $A_{j}$ to bundle $A_i$. Again, one can verify that $\fEFS_i(\ni) = \bigcap_{j: j \neq i} \eEFS_j(\ni)$.

\begin{definition} [Envy-free Share]
\label{def:efs}
    For any agent $i \in [n]$, we define the \emph{envy-free share} of agent $i$ to be \[\EFS_i(\ni) = \max_{(A_1, \dots, A_n) \in \fEFS_i(\ni)} u_i(A_i).\]
\end{definition}

Since $\fCCS_i(\ni)$ is clearly a subset of $\fEFS_i(\ni)$ by comparing the definitions of both shares, we have $\EFS_i(\ni) \ge \CCS_i(\ni)$ for all agents $i \in [n]$. For simplicity, we will  use $\CCS_i$ and $\EFS_i$ to denote an agent's two shares respectively when it is clear which instance we are considering. 

\paragraph{Computation.} Given a problem instance as input, we now consider the computation of both shares. First, we show that the cake-cutting share $\CCS_i$ is also the maximal value of $u_i(A_i)$ such that all other agents weakly prefer the proportional allocation to bundle $A_i$.
We prove this by showing there exists an allocation $\na \in \fCCS_i(\ni)$ which maximizes $u_i(A_i)$ and allocates $1 - A_i(k)$ equally among the other agents $j \neq i$ for each item $k$.

\begin{lemma}
\label{lem:ccs_compute}
    For any agent $i$, the value $\CCS_i$ is the optimal solution to the following LP: 
    \begin{align}
        \max \qquad &\sum_{k \in [m]} v_{ik} \cdot x_k   \,,  \label{eq:LP-CCSval-obj} \\
        \textrm{s.t.} \qquad &\sum_{k \in [m]} v_{jk} \cdot x_k  \le \frac{u_j([m])}{n}\,, \ \  \forall\, j \in ([n] \setminus \{i\}) \,; \label{cons:LP-CCSval-prop}\\
        & 0 \le x_k \le 1 \,, 
        \ \ \forall\, k \in [m]  \,. \notag
    \end{align}
\end{lemma}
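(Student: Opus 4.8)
The plan is to establish that the cake-cutting share $\CCS_i$, defined as a maximization over the polyhedron $\fCCS_i(\ni)$ of allocations in which every other agent $j$ weakly prefers each bundle $A_{j'}$ to $A_i$, equals the optimal value of the displayed LP in which we only require each $j \neq i$ to prefer the proportional allocation to agent $i$'s bundle. I would prove this by showing the two optimal values coincide via inequalities in both directions, where one direction is essentially a relaxation argument and the other direction requires constructing a feasible allocation from an LP-optimal bundle.

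First I would show $\CCS_i \le \mathrm{LP}^\star$. Take any allocation $\na = (A_1, \dots, A_n) \in \fCCS_i(\ni)$ achieving the cake-cutting share, and set $x_k = A_i(k)$, so the LP objective equals $u_i(A_i)$. It remains to verify this $\bx$ is LP-feasible. The box constraints $0 \le x_k \le 1$ are immediate since $A_i$ is a valid bundle. For constraint~\eqref{cons:LP-CCSval-prop}, fix $j \neq i$. The $\fCCS_i$ condition gives $u_j(A_i) \le u_j(A_{j'})$ for every $j' \in [n] \setminus \{i\}$, i.e. agent $j$'s utility for $A_i$ is at most its utility for each of the other $n-1$ bundles (including its own). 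Averaging these $n-1$ inequalities and adding the trivial bound $u_j(A_i) \le u_j(A_i)$, summing over all $n$ bundles and using that total allocation of each item is at most its supply of $1$, I get $n \cdot u_j(A_i) \le \sum_{j' \in [n]} u_j(A_{j'}) \le u_j([m])$, hence $u_j(A_i) \le u_j([m])/n$. This is exactly the LP constraint, so $\bx$ is feasible and $\CCS_i \le \mathrm{LP}^\star$.

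Next I would show $\mathrm{LP}^\star \le \CCS_i$ by constructing an explicit feasible allocation. Let $\bx$ be an LP-optimal solution and set $A_i(k) = x_k$. For the remaining $1 - x_k$ supply of each item $k$, I would split it \emph{equally} among the other $n-1$ agents, i.e. set $A_{j'}(k) = (1 - x_k)/(n-1)$ for each $j' \neq i$ (this is the construction the lemma's surrounding text hints at). This is a valid allocation since each item's total usage is exactly $1$. I must check $\na \in \fCCS_i(\ni)$: for any $j, j' \neq i$, the bundles $A_{j'}$ are all identical, so $u_j(A_{j'}) = \tfrac{1}{n-1}\sum_k v_{jk}(1 - x_k) = \tfrac{1}{n-1}\bigl(u_j([m]) - u_j(A_i)\bigr)$, and I need $u_j(A_i) \le u_j(A_{j'})$. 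Substituting, this inequality rearranges to $n \cdot u_j(A_i) \le u_j([m])$, i.e. $u_j(A_i) \le u_j([m])/n$, which is precisely the LP constraint~\eqref{cons:LP-CCSval-prop} that $\bx$ satisfies. Hence $\na$ is feasible for $\fCCS_i$ with $u_i(A_i) = \mathrm{LP}^\star$, giving $\CCS_i \ge \mathrm{LP}^\star$.

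Combining the two directions yields $\CCS_i = \mathrm{LP}^\star$. I do not expect a serious obstacle here; the main thing to get right is the equal-split construction and the clean algebraic equivalence between ``all other $n-1$ bundles are weakly preferred to $A_i$'' and ``the proportional share is weakly preferred to $A_i$.'' The subtle point worth stating carefully is \emph{why} the proportional-only constraint suffices: it works because once $A_i$ is fixed, distributing the leftover uniformly makes every non-$i$ bundle worth exactly the proportional-minus-$A_i$ value to agent $j$, so the single proportional constraint is tight enough to enforce all the pairwise envy constraints simultaneously. This is the conceptual heart of the argument and is where I would spend a sentence or two for clarity.
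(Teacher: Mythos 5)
Your proof is correct and follows essentially the same route as the paper: the same equal-split construction $A_{j'}(k) = (1-x_k)/(n-1)$ for the direction $\mathrm{LP}^\star \le \CCS_i$, and for the other direction the same counting argument (you sum the envy constraints directly, whereas the paper phrases it as a contradiction, but the content is identical). No gaps; the algebra in both directions checks out.
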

\begin{proof}
    We can interpret the LP in Eq~(\ref{eq:LP-CCSval-obj}) as constructing agent $i$'s bundle $A_i = (x_1, \dots, x_m)$ in an allocation $\na$ where $A_i(k) = x_k$.
    The objective function Eq~(\ref{eq:LP-CCSval-obj}) is exactly agent $i$'s utility for bundle $A_i$,
    so it remains to show that there exists an allocation $\na = (A_1, \dots, A_n) \in \fCCS_i(\ni)$ if and only if $A_i$ satisfies Eq~(\ref{cons:LP-CCSval-prop}).
    
    If $A_i$ satisfies Eq~(\ref{cons:LP-CCSval-prop}), then we can construct the other bundles by splitting the remaining items $\mathbb{1}_n - A_i$ equally among the bundles.
    In such a partition, each bundle $A_{j'} \neq A_i$ contains exactly $(1 - x_k)/(n-1)$ of each item $k$.
    By Eq~(\ref{cons:LP-CCSval-prop}), any agent $j \in ([n] \setminus \{i\})$ has utility at least $u_j([m]) - u_j([m])/n = u_j([m])(n-1)/n$ for bundle $(\mathbb{1}_n - A_i)$.
    Thus for any $j' \in ([n] \setminus \{i\})$, we have $u_j(A_{j'}) \ge u_j([m])(n-1)/(n(n-1)) = u_j([m])/n \ge u_j(A_i)$ as desired.

    Now consider any allocation $\na = (A_1, \dots, A_n) \in \fCCS_i(\ni)$.
    Assume for contradiction that Eq~(\ref{cons:LP-CCSval-prop}) does not hold, i.e., there is an agent $j \in ([n] \setminus \{i\})$, such that $j$ values bundle $A_i$ strictly more than the proportional allocation.
    Since the utilities are additive, there must exist another bundle $A_{j'}$ that agent $j$ strictly values less than the proportional allocation.
    However, this implies $u_j(A_i) > u_j(A_{j'})$, which contradicts that $\na \in \fCCS_i(\ni)$. This completes the proof.
    \end{proof}

Similarly, the envy free share $\EFS_i$ is the solution to the following linear program:

\begin{align}
        \max \quad &\sum_{k \in [m]} v_{ik} \cdot x_{ik}   \,,  \label{eq:LP-EFSval-obj} \\
        \textrm{s.t.} \quad &\sum_{k \in [m]} v_{jk} \cdot x_{ik}  \le \sum_{k \in [m]} v_{jk} \cdot x_{jk}\,, \ \ \forall\, j \in ([n] \setminus \{i\}) \,;  \label{cons:LP-EFSval-prop} \\
        & \sum_{j \in [n]} x_{jk} \le 1 \,, \ \ \forall\, k \in [m] \,; \notag \\
        &  x_{jk} \ge 0 \,, \ \  \forall\, j \in [n], \ k \in [m]  \,. \notag
\end{align}

\paragraph{Scale Invariance and Ordering.}  We now show that the values of the shares are invariant to scaling the utility functions of different agents by different factors.

\begin{lemma}
\label{lem:scale_invar}
\label{lem:scale_invar2}
    Consider an instance $\ni = \langle [n], [m], V \rangle$ of the fair division problem.
    Now create an alternative instance $\ni' = \langle [n], [m], V' \rangle$ where for every agent $i$, all of its values are scaled by $\alpha_i > 0$, i.e., $v'_{ik} = \alpha_i \cdot v_{ik}$ for all $k \in [m]$. Then $\CCS_i(\ni') = \alpha_i \cdot \CCS_i(\ni)$ and $\EFS_i(\ni') = \alpha_i \cdot \EFS_i(\ni)$.
\end{lemma}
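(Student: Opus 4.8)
The plan is to lean entirely on the linear-programming characterizations of the two shares -- the CCS LP from Lemma~\ref{lem:ccs_compute} and the EFS LP in Eq~(\ref{eq:LP-EFSval-obj}) -- and to argue that scaling the valuations leaves each feasible region unchanged while multiplying only the objective by $\alpha_i$. The key structural observation, which drives both cases, is that every constraint that involves agent $j$'s valuation is homogeneous of degree one in $v_j$: both sides scale by the same factor $\alpha_j$, so after dividing through by $\alpha_j > 0$ we recover the original constraint verbatim.

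First, for the cake-cutting share, I would substitute $v'_{jk} = \alpha_j v_{jk}$ into the LP of Lemma~\ref{lem:ccs_compute}. The proportionality constraint (\ref{cons:LP-CCSval-prop}) for an agent $j \neq i$ becomes $\sum_{k} \alpha_j v_{jk} x_k \le \alpha_j u_j([m])/n$, and since $\alpha_j > 0$ we may cancel $\alpha_j$ to obtain exactly the original inequality. The box constraints $0 \le x_k \le 1$ contain no valuations at all. Hence the feasible polytope in the $x$-variables is identical for $\ni$ and $\ni'$. Meanwhile the objective (\ref{eq:LP-CCSval-obj}) becomes $\sum_{k} \alpha_i v_{ik} x_k = \alpha_i \sum_{k} v_{ik} x_k$, i.e.\ precisely $\alpha_i$ times the original objective. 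Maximizing an objective scaled by the positive constant $\alpha_i$ over an unchanged feasible set gives $\CCS_i(\ni') = \alpha_i \cdot \CCS_i(\ni)$.

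The envy-free share follows the same template applied to the LP in Eq~(\ref{eq:LP-EFSval-obj}), which now carries variables $x_{jk}$ for all agents. The envy constraint (\ref{cons:LP-EFSval-prop}) reads $\sum_{k} v_{jk} x_{ik} \le \sum_{k} v_{jk} x_{jk}$; under the scaling both sides pick up the same factor $\alpha_j$, so the constraint is again unchanged after cancellation. The supply constraints $\sum_{j} x_{jk} \le 1$ and the nonnegativity constraints are valuation-free. Thus the feasible region is once more invariant, the objective scales by exactly $\alpha_i$, and we conclude $\EFS_i(\ni') = \alpha_i \cdot \EFS_i(\ni)$.

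I expect no genuine obstacle here: the whole argument rests on the homogeneity of each proportionality/envy constraint in the corresponding agent's valuation, which is what makes the feasible set scale-invariant. The only point that deserves a word of care is the strict positivity $\alpha_j > 0$, which is needed so that cancelling $\alpha_j$ preserves the direction of the inequality; were some $\alpha_j$ allowed to be zero, that cancellation would fail and the feasible region could genuinely change.
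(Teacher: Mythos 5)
Your proof is correct and follows essentially the same route as the paper's: both invoke the LP characterizations of $\CCS_i$ and $\EFS_i$, observe that each valuation-dependent constraint is homogeneous in the corresponding agent $j$'s values (so the factor $\alpha_j > 0$ cancels and the feasible region is unchanged), and note that the objective scales by exactly $\alpha_i$. The paper's proof is merely a terser statement of the same argument, so there is nothing to add.
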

\begin{proof}
    Consider the LP formulation of $\CCS_i$ in Lemma~(\ref{lem:ccs_compute}). For instance $\ni'$, the constraints remained unchanged and the objective function is scaled by $\alpha_i$, implying the result. The same argument applies to $\EFS_i$.
\end{proof}

\paragraph{Intermediate Fair Shares.} We note that neither the polyhedra corresponding to $\CCS_i$ nor $\EFS_i$ capture exact envy -- the former polytope captures any agent $j$ preferring $i$'s bundle the least among the $n$ bundles, while the latter captures agent $j$ preferring $i$'s bundle at most as much as $j$'s own bundle. As mentioned in the previous section, true envy-freeness would be captured by
$\fEF_i(\ni) = \{\na  \in  \np(n, m) \mid u_j(A_{j}) \ge u_j(A_{j'}), \ \forall j \in [n] \setminus \{i\}, j' \in [n] \}.$
Let $\EF_i(\ni) = \max_{(A_1, \dots, A_n) \in \fEF_i(\ni)} u_i(A_i)$ denote this share. Similarly, denote the proportional share as $\PROP_i(\ni) = u_i([m])/n$.

We now show the following ordering result of these shares, hence showing the cake cutting share imposes a stronger constraint than pure envy.

\begin{lemma} \label{lem:order}
For any instance $\ni$ and all agents $i \in [n]$:
$$ \PROP_i(\ni) \le \CCS_i(\ni) \le \EF_i(\ni) \le \EFS_i(\ni).$$
\end{lemma}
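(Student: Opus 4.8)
The plan is to establish the chain of four inequalities $\PROP_i \le \CCS_i \le \EF_i \le \EFS_i$ by proving each of the three links separately, with the general strategy being to show that the feasible polyhedron on the left is contained in the feasible polyhedron on the right (so that a maximization over the larger set can only yield a larger value). Since all four quantities are maximizations of the same objective $u_i(A_i)$ over different allocation sets, each containment $\F^{(\cdot)}_i \subseteq \F^{(\cdot')}_i$ immediately gives the corresponding inequality on the shares.

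First I would handle $\PROP_i(\ni) \le \CCS_i(\ni)$. Using the LP characterization in Lemma~\ref{lem:ccs_compute}, I note that the proportional bundle $A_i$ with $x_k = 1/n$ for every item $k$ satisfies constraint~\eqref{cons:LP-CCSval-prop}, since $\sum_k v_{jk}/n = u_j([m])/n$ holds with equality for every $j$. Hence this bundle is feasible for the $\CCS_i$ LP and achieves objective value exactly $\PROP_i(\ni) = u_i([m])/n$, so the optimum $\CCS_i$ is at least this. For $\EF_i(\ni) \le \EFS_i(\ni)$, I would argue set containment directly: any allocation in $\fEF_i(\ni)$ satisfies $u_j(A_j) \ge u_j(A_{j'})$ for all $j \neq i$ and all $j'$, and specializing $j' = i$ gives $u_j(A_j) \ge u_j(A_i)$, which is precisely the defining constraint of $\fEFS_i(\ni)$; thus $\fEF_i(\ni) \subseteq \fEFS_i(\ni)$ and the share inequality follows.

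The middle inequality $\CCS_i(\ni) \le \EF_i(\ni)$ is the one I expect to require the most care, because it is not an immediate set containment in the raw variables: the $\CCS_i$ LP only fixes agent $i$'s bundle and implicitly splits the rest equally, whereas $\EF_i$ quantifies over genuine envy-free allocations among all agents. My plan is to take an optimal $\CCS_i$ bundle $A_i$ (so every agent $j \neq i$ weakly prefers the proportional allocation to $A_i$) and extend it to a full allocation that is envy-free among the agents $j \neq i$. The natural candidate is to split the leftover $\mathbb{1} - A_i$ equally among the $n-1$ other agents, exactly as in the proof of Lemma~\ref{lem:ccs_compute}; this makes all bundles $A_{j'}$ for $j' \neq i$ identical, so no agent $j \neq i$ envies any other such agent, and the $\CCS$ constraint guarantees $u_j(A_{j'}) \ge u_j([m])/n \ge u_j(A_i)$, so $j$ does not envy $i$ either. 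This shows the constructed allocation lies in $\fEF_i(\ni)$ while preserving the objective $u_i(A_i)$, giving $\EF_i(\ni) \ge \CCS_i(\ni)$.

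The subtle point to verify in that middle step is that the equal-split extension indeed yields an envy-free allocation in the sense required by $\fEF_i$, namely that the constraints $u_j(A_j) \ge u_j(A_{j'})$ hold for every $j \neq i$ and every $j' \in [n]$ (including $j' = i$ and $j'$ another non-$i$ agent). Equality of all the non-$i$ bundles handles the comparisons among them, and the $\CCS$ feasibility handles the comparison against $A_i$; note there is no constraint indexed by $j = i$ in $\fEF_i$, so we need not check whether agent $i$ envies anyone. Once these three links are in place, chaining them gives the full ordering, completing the proof.
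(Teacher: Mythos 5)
Your proposal is correct and follows essentially the same route as the paper's proof: the proportional bundle's feasibility in the LP of Lemma~\ref{lem:ccs_compute} gives $\PROP_i \le \CCS_i$, the equal-split extension of an optimal $\CCS_i$ bundle (as constructed in that lemma's proof) yields an allocation in $\fEF_i$ giving $\CCS_i \le \EF_i$, and the containment $\fEF_i \subseteq \fEFS_i$ gives $\EF_i \le \EFS_i$. Your writeup is in fact slightly more explicit than the paper's about why the equal-split allocation satisfies the envy-free constraints, but the underlying argument is identical.
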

\begin{proof}
For any instance $\ni$ and any agent $i$, we have $\fCCS_i(\ni) \subseteq \fEFS_i(\ni)$ and $\fEF_i(\ni) \subseteq \fEFS_i(\ni)$. By the strictness of the constraint sets, it follows that $\CCS_i \le \EFS_i$ and $\EF_i \le \EFS_i$. Next note that the proportional allocation satisfies the LP in Lemma~\ref{lem:ccs_compute}, so that $\PROP_i \le \CCS_i$. Further, the proof of Lemma~\ref{lem:ccs_compute} shows that an optimal solution to $\CCS_i$ allocates, for each item $k$, the quantity $1 - x_k$ equally among the agents $j \neq i$. Therefore, any agent $j \neq i$ does not envy any other agent $j' \in [n]$. This shows that the solution satisfies the polytope $\fEF_i$, so that $\CCS_i \le \EF_i$. This completes the proof.
\end{proof}

\paragraph{Approximating the Shares and $\alpha(n,m)$.}
\label{sec:approx}
On any instance $\ni = \langle [n],[m],V \rangle$, we seek an allocation where the utility of each agent $i$ is at least $\CCS_i$ (resp. $\EFS_i$). Since this may not be possible, we consider the ``best possible'' approximation. On any instance, this allocation can be computed by solving the following LP, which we describe for $\CCS_i$; an analogous LP holds for $\EFS_i$:

\begin{align}
    \max \qquad & \theta \,, & \label{lp:approx}\\
    \textrm{s.t.} \qquad & \sum_{k} v_{i k} \cdot x_{i k} \ge \theta \cdot \CCS_i \,, \ \ 
    \forall \, i \in [n] \,; \notag \\
    & \sum_{i} x_{i k} \le 1 \,, \ \ 
    \forall \, k \in [m] \notag \,; \\
    & x_{ik} \ge 0 \,, \ \ \forall \, i \in [n], \ k \in [m] \,. \notag
\end{align}

Let $\theta(\ni)$ denote the optimal approximation for instance $\ni$. As shown above, this can be computed by a linear program. However, our goal is different; we seek to bound the worst-case approximation. We denote $\alpha(n,m) = \max_\ni \frac{1}{\theta(\ni)}$, where the maximum is over all instances with $n$ agents and $m$ items. This represents the worst-case approximation bound for such instances. Our goal is to show that $\alpha(n,m)$ is small, in particular that it beats the trivial bound of $\alpha(n,m) = n$ that is achievable for any fair share via proportionally allocating each item.

\paragraph{$\pEFS$ Share.} As discussed before, in this notion, for each agent $i$ there is a set $W_i$ of agents of size $\lfloor \frac{n - 1}{\Delta} \rfloor$, whose utility $i$ is unaware of. We again consider the set of feasible allocations $\fEFS_i(\ni)$. However, we constrain this set to be feasible {\em regardless} of $\{u_j\}_{j \in W_i}$. It is easy to check that this forces $A_j = A_i$ for $j \in W_i$. Therefore, we can write $\pfEFS_i(\ni,W_i) = \{\na \in \np(n, m) \mid u_j(A_{i}) \le u_j(A_j), \ \forall j \in [n] \setminus (W_i \cup \{i\}); \ A_j = A_i, \  \forall j \in W_i\}$. We can now define $\pEFS_i$ analogous to Definition~\ref{def:efs} as:

\begin{definition} [$\pEFS$ Share]
\label{def:pefs}
    For any agent $i \in [n]$,
    \[\pEFS_i(\ni,W_i) = \max_{(A_1, \dots, A_n) \in \pfEFS_i(\ni,W_i)} u_i(A_i). \]
    We define $\pEFS_i(\ni) = \E_{W_i}[\pEFS_i(\ni,W_i)]$ where the expectation is over a uniformly random  $W_i$ of size $\lfloor \frac{n - 1}{\Delta} \rfloor$.
\end{definition}

For any given $W_i$, the value $\pEFS_i(\ni,W_i)$ is the solution to an LP analogous to Eq~(\ref{eq:LP-EFSval-obj}); this is presented in Section~\ref{app:partial}. We can now efficiently estimate $\pEFS_i(\ni)$ to an arbitrary approximation by randomly sampling  $W_i$ repeatedly and solving the LP over the samples.

We can now define $\alpha(n,m)$ the same as before, and we present bounds for $\alpha(n,\cdot)$ in Section~\ref{app:partial}.



\section{Upper Bound: Proof of Theorem~\ref{thm:main1}}
\label{sec:main1}

We now show that $\alpha(n,\cdot) = O(\sqrt{n})$. By Lemma~\ref{lem:order}, it suffices to show an upper bound on approximation for the envy-free share, $\EFS_i$. Note that the same upper bound also holds for the intermediate notion of $\EF_i$.




\paragraph{Step 1: Reduction to Welfare Maximization.}
We now prove Theorem~\ref{thm:main1} in a sequence of three steps. 
We first show it suffices to bound the ratio between the social welfare, $SW(\ni) = \sum_{k \in [m]} \max_{i \in [n]} v_{ik}$, and the sum of envy-free shares, $C(\ni) = \sum_i \EFS_i$,
for all instances $\ni = \langle [n], [m], V \rangle$ with $n$ agents. 
The following theorem is a direct application of strong duality. 

\begin{theorem}
\label{thm1}
Suppose for all instances $\ni$ with $n$ agents, we have
$SW(\ni) \ge \theta_n \cdot C(\ni).$
Then for all instances with $n$ agents, there exists an allocation where each agent $i$ obtains utility at least $\theta_n \cdot \EFS_i$.
\end{theorem}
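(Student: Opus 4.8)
The plan is to apply linear programming duality to the approximation program~\eqref{lp:approx} (written for $\EFS$) and then to reinterpret its dual using scale invariance. Since the optimal value of~\eqref{lp:approx} equals the best achievable approximation $\theta(\ni)$, it suffices to prove $\theta(\ni) \ge \theta_n$ for every instance $\ni$ with $n$ agents, and strong duality lets me argue about this value through the dual.

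First I would dualize~\eqref{lp:approx}, assigning multipliers $y_i \ge 0$ to the share constraints $\sum_k v_{ik} x_{ik} \ge \theta \cdot \EFS_i$ and $z_k \ge 0$ to the supply constraints $\sum_i x_{ik} \le 1$. Because $\theta$ is a free variable, its column produces the normalization \emph{equality} $\sum_i \EFS_i y_i = 1$, while each $x_{ik} \ge 0$ produces $z_k \ge v_{ik} y_i$; the dual objective is $\min \sum_k z_k$. Setting each $z_k$ to its smallest feasible value $\max_i v_{ik} y_i$ and invoking strong duality, the dual collapses to
$$\theta(\ni) = \min_{\substack{y \ge 0 \\ \sum_i \EFS_i y_i = 1}} \ \sum_k \max_i \left( v_{ik}\, y_i \right).$$

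The crux is to recognize the right-hand side as a social-welfare-to-share ratio of a rescaled instance. For any feasible $y$ with strictly positive coordinates, let $\ni_y$ be the instance obtained from $\ni$ by scaling agent $i$'s valuation by $y_i$. By Lemma~\ref{lem:scale_invar}, $\EFS_i(\ni_y) = y_i \EFS_i(\ni)$, so $C(\ni_y) = \sum_i y_i \EFS_i(\ni) = 1$ by dual feasibility, while $SW(\ni_y) = \sum_k \max_i (y_i v_{ik})$ is exactly the dual objective at $y$. The hypothesis $SW(\ni_y) \ge \theta_n \cdot C(\ni_y)$ therefore gives $\sum_k \max_i(y_i v_{ik}) \ge \theta_n$ for every such $y$, and minimizing over $y$ yields $\theta(\ni) \ge \theta_n$. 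As $\theta(\ni)$ is the optimum of~\eqref{lp:approx}, an optimal primal solution is then an allocation giving each agent $i$ utility at least $\theta_n \cdot \EFS_i$, as required.

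I expect two points to require care. The dual derivation must track signs so that the free variable $\theta$ yields the equality $\sum_i \EFS_i y_i = 1$ rather than an inequality; I would confirm this by first putting~\eqref{lp:approx} in canonical maximization form. The more delicate issue is that Lemma~\ref{lem:scale_invar} only permits strictly positive scaling factors, whereas a dual-optimal $y$ may have zero coordinates. I would resolve this by observing that agents with $\EFS_i = 0$ value nothing (since $\EFS_i \ge \PROP_i$) and can be discarded, after which the feasible region is compact and the strictly positive $y$ are dense in it; continuity of the dual objective then extends the bound $\ge \theta_n$ from this dense subset to the entire feasible set.
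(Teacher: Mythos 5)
Your proof is correct and takes essentially the same route as the paper's: both dualize the approximation LP~\eqref{lp:approx}, invoke strong duality, and use the scale-invariance Lemma~\ref{lem:scale_invar} to reinterpret a feasible dual vector as per-agent scaling factors, so that the hypothesis applied to the scaled instance lower-bounds the dual objective by $\theta_n$. Your elimination of the $z_k$ variables to expose $SW(\ni_y)$ explicitly (the paper instead chains inequalities through a welfare-maximizing allocation of the scaled instance), and your extra care about zero coordinates of $y$ (which the paper silently skips, and which is in fact harmless because zeroing an agent's values only enlarges the other agents' envy-free polyhedra, so $C(\ni_y) \ge 1$ still holds), are cosmetic refinements rather than a different argument.
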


\begin{proof}
    We wish to show that the optimal solution of the following LP (LP1) has value at least $\theta_n$:
    \begin{align}
        \max \qquad & \lambda \,, \notag\\
        \textrm{s.t.} \qquad & \sum_{k} v_{i k} \cdot y_{i k} \ge \lambda \cdot \EFS_i \,, \ \  \forall \, i \in [n] \,; \label{cons:LP1_1} \\
        & \sum_{i} y_{i k} \le 1 \,, \ \ 
        \forall \, k \in [m] \label{cons:LP1_2} \,; \\
        & y_{ik} \ge 0 \,, \ \  \forall \, i \in [n], \ k \in [m] \,. \notag
    \end{align}
    Let $\theta'$ be the optimal value of LP1. We take the dual of LP1, with dual variables $\{\beta_i\}_{i \in [n]}$ (for constraints \ref{cons:LP1_1}) and $\{\gamma_k\}_{k \in [m]}$ (for constraints \ref{cons:LP1_2}):
    \begin{align}
        \min \qquad & \sum_{k} \gamma_k \,,  \notag\\
        \textrm{s.t.} \qquad & \sum_{i} \EFS_i \cdot \beta_i \ge 1 \,; \label{cons:LP2_1} \\
        & -v_{i k}\cdot \beta_i + \gamma_k \ge 0 \,, \ \  \forall \, i \in [n], \ k \in [m] \,; \label{cons:LP2_2} \\
        & \beta_i \ge 0, \gamma_k \ge 0 \,, \ \ \forall \, i \in [n], \ k \in [m] \,. \notag
    \end{align}
    By duality, there exist variables $\{\beta_i^*\}_{i \in [n]}$ and $\{\gamma_k^*\}_{k \in [m]}$ that satisfy the above constraints while $\sum_k {\gamma_k^*} = \theta'$.
    
    Now consider a scaled instance 
    $\ni'$ where the value of each agent $i$ for each item $k$ is $\beta_i^* \cdot v_{ik}$. By Lemma~\ref{lem:scale_invar}, the corresponding envy-free shares in $\ni'$ become $\beta_i^* \cdot \EFS_i$.
    Applying the precondition of the theorem, we have $SW(\ni') \ge \theta_n \cdot C(\ni')$, so there must be an allocation $\mathbf{x} = \{x_{i k}\}_{i\in [n], k \in [m]}$ that satisfies the following constraints:
    \begin{align}
        & \sum_{i, k} \beta_i^* \cdot v_{i k} \cdot x_{i k} \ge \sum_{i} \beta_i^* \cdot \EFS_i \cdot \theta_n \,; \label{eq:cond_1}\\
        & \sum_{i} x_{i k} \le 1 \,, \ \ 
        \forall \, k \in [m] \notag \,; \\
        & x_{ik} \ge 0 \,, \ \  \forall \, i \in [n], \ k \in [m] \,. \notag
    \end{align}
    Thus by Eq~(\ref{eq:cond_1}) and Eq~(\ref{cons:LP2_1}), we have 
    \begin{equation}
        \label{eq:ge_theta_n}
        \sum_{i, k} \beta_i^* \cdot v_{i k} \cdot x_{i k} \ge \sum_{i} \beta_i^* \cdot \EFS_i \cdot \theta_n \ge \theta_n. 
    \end{equation}
    Since $x_{ik} \ge 0$ for all $i$ and $k$,
    we have by Eq~(\ref{cons:LP2_2}) that
    \[
    \sum_{k} \beta_i^* \cdot v_{i k} \cdot x_{i k} \le \sum_{k} \gamma_k^* \cdot x_{i k}, \ \forall i \in [n]. 
    \]
    Summing up on $i$, we get 
    \[
    \sum_{i, k} \beta_i^* \cdot v_{i k} \cdot x_{i k} \le \sum_{i, k} \gamma_k^* \cdot x_{i k} = \sum_{k} \gamma_k^* \cdot \left(\sum_{i} x_{i k} \right) \le \sum_{k} \gamma_k^* = \theta'.
    \]
    Finally, we apply Eq~(\ref{eq:ge_theta_n}) to get $\theta' \ge \theta_n$ as desired.
\end{proof}

\paragraph{Step 2: Reduction to Binary Utilities.}
Given Theorem~\ref{thm1}, our goal is to show that the precondition holds for $\theta_n = \Omega(1/\sqrt{n})$. In other words, it suffices to show that for any instance $\ni = \langle [n], [m], V \rangle$, the sum of envy-free shares, $C(\ni)$, is at most $O(\sqrt{n})$ times the social welfare, $SW(\ni)$.  

We prove that it suffices to bound the ratio $C(\ni)/SW(\ni)$ for binary instances. An instance is said to be binary if $v_{ik} \in \{0,1\}$ for all $i \in [n], k \in [m]$. We show that:

\begin{theorem}
\label{thm2}
Suppose that for all binary instances $\ni$ with $n$ agents, we have $C(\ni) \le \alpha_n \cdot SW(\ni)$. Then for all instances with general valuations and $n$ agents, we have $C(\ni) \le \alpha_n \cdot SW(\ni)$.
\end{theorem}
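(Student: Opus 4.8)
The plan is to show that splitting items into ``binary pieces'' can only \emph{increase} the ratio $C(\ni)/SW(\ni)$, so that the worst case is attained at a binary instance. Concretely, given a general instance $\ni = \langle [n],[m],V\rangle$, I will build a single binary instance $\ni'$ (with the same $n$ agents but more items) such that $SW(\ni') = SW(\ni)$ and $\EFS_i(\ni') \ge \EFS_i(\ni)$ for every $i$; summing the latter over $i$ gives $C(\ni) \le C(\ni')$, and then the binary hypothesis applied to $\ni'$ yields $C(\ni) \le C(\ni') \le \alpha_n \, SW(\ni') = \alpha_n \, SW(\ni)$.

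First I would normalize the data. Since multiplying every $v_{ik}$ by a common constant $c>0$ scales each $\EFS_i$ by $c$ (Lemma~\ref{lem:scale_invar}) and also scales $SW$ by $c$, the inequality $C(\ni) \le \alpha_n \, SW(\ni)$ is invariant under such scaling; moreover both $C(\cdot)$ and $SW(\cdot)$ are continuous in $V$ (each $\EFS_i$ is the value of an LP whose feasible region is always nonempty, as it contains the proportional allocation, and bounded). Hence it suffices to prove the inequality for instances with rational, and after clearing a common denominator, integer valuations, so I assume $v_{ik}\in\mathbb{Z}_{\ge 0}$.

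Next, construct $\ni'$ by a threshold split: replace each item $k$ by items $(k,\ell)$ for $\ell = 1,\dots,\max_j v_{jk}$, giving agent $j$ the binary value $v'_{j,(k,\ell)} = \mathbb{1}[v_{jk}\ge \ell]\in\{0,1\}$. Then $\ni'$ is binary. The social welfare is preserved because $\max_j \mathbb{1}[v_{jk}\ge\ell] = \mathbb{1}[\max_j v_{jk}\ge\ell]$, so summing over $\ell$ recovers $\max_j v_{jk}$ and hence $SW(\ni')=SW(\ni)$. For the shares, I embed any allocation $\na$ feasible for $\EFS_i(\ni)$ into $\ni'$ by letting each agent take, of every piece $(k,\ell)$, the same fraction it took of item $k$; since $\sum_\ell \mathbb{1}[v_{jk}\ge\ell] = v_{jk}$, this preserves every agent's utility for every bundle, hence preserves both the envy constraints $u_j(A_i)\le u_j(A_j)$ and the supply constraints. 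Thus the embedded allocation is feasible for $\EFS_i(\ni')$ with the same objective value, giving $\EFS_i(\ni')\ge \EFS_i(\ni)$.

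The step I expect to be the main obstacle, and the one where a natural first attempt goes wrong, is getting the \emph{right} reduction and the correct direction of the key inequality. It is tempting to treat the threshold levels as separate instances $\ni_\ell$ and to argue $C(\ni) \le \sum_\ell C(\ni_\ell)$ using $SW(\ni)=\sum_\ell SW(\ni_\ell)$; but $\EFS$ is \emph{not} subadditive across such decoupled levels (an agent can combine the envy slack a low-valued item grants at one level with a high-valued item at another, an option lost once the levels are separated), and one can in fact exhibit instances where $C(\ni) > \sum_\ell C(\ni_\ell)$. The correct move is to keep all pieces in a \emph{single} instance $\ni'$, where each agent still holds one aggregate bundle and envy is measured across all pieces simultaneously; the monotonicity $\EFS_i(\ni')\ge\EFS_i(\ni)$ then follows cleanly because refining items only enlarges agent $i$'s feasible set while preserving every utility. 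The remaining care is the routine continuity-and-rationality reduction that makes the threshold widths integral, so that each constructed piece is genuinely $\{0,1\}$-valued.
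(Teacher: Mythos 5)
Your core construction is exactly the paper's: the threshold split into pieces $(k,\ell)$ with value $\mathbb{1}[v_{jk}\ge \ell]$ is the same as the paper's ``$q_k$ copies'' construction, and your verification that it preserves $SW$ exactly and weakly increases every $\EFS_i$ (by giving each agent the same fraction of every piece of item $k$ that it had of item $k$) is correct and complete. The genuine gap is in the step you dismiss as routine. The claim that $C(\cdot)$ is continuous in $V$ is false, and the justification you offer (the $\EFS_i$ LP has a nonempty, bounded feasible region) cannot establish it, because the valuations enter the \emph{constraint matrix}, not just the objective. Concretely, take $n=2$, $m=1$, $v_{11}=1$, $v_{21}=t$. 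For every $t>0$ the envy constraint forces $x_{11}\le x_{21}$, hence $\EFS_1 = 1/2$; at $t=0$ the constraint is vacuous and $\EFS_1 = 1$. So the LP value jumps \emph{up} in the limit: it is only upper semicontinuous. Your limiting argument needs the opposite direction: from $C(V^{(t)})\le \alpha_n\, SW(V^{(t)})$ for rational $V^{(t)}\to V$ you must conclude $C(V)\le \alpha_n\, SW(V)$, which requires $C(V)\le \liminf_t C(V^{(t)})$, i.e.\ \emph{lower} semicontinuity --- exactly the direction the example refutes.

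The gap is repairable, but it takes an actual argument rather than a citation to LP boundedness. Two ingredients suffice: (i) approximate entrywise so that zero entries stay zero (automatic, since $0$ is rational; this is what rules out the counterexample above, where the discontinuity is created by perturbing a zero); and (ii) prove that along such zero-preserving perturbations $\EFS_i$ does not drop in the limit. Point (ii) is where the work lies: one can use the strictly feasible point in which agent $i$ takes nothing and every item is split equally among the other agents, so each non-vacuous envy constraint holds with slack at least $\min_{j\ne i}\sum_k v_{jk}/(n-1)>0$; mixing an optimal solution with this point by a vanishing amount restores feasibility after a small perturbation at a vanishing loss in objective. For comparison, the paper handles the same delicacy by fixing the instance, scaling by $1/\epsilon$, and taking ceilings $\lceil v_{ik}/\epsilon\rceil$, so that as $\epsilon\to 0$ every nonzero entry suffers only a vanishing \emph{relative} perturbation and zeros are untouched; its one-line accounting is also terse, but the structure never relies on continuity of the LP value under arbitrary perturbations. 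Either way, the reduction from real to integer valuations is not ``routine continuity''; it is the one step of this theorem where an unguarded claim is actually false.
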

\begin{proof}
Given a general instance $\ni$, we construct a binary instance with the same number of agents, where the sum of envy free shares does not decrease, while the social welfare remains the same. This will show the theorem.

Fix a small $\epsilon > 0$. First scale all values so that $\min(C(\ni), SW(\ni)) = 1$. Note that this preserves the ratio $\frac{C(\ni)}{SW(\ni)}$, since both quantities are scaled by the same amount, by Lemma~\ref{lem:scale_invar}. Next construct $\ni' = \langle [n], [m], V' \rangle$, where $v'_{ik} = \lceil \frac{v_{ik}}{\epsilon} \rceil$. Without the ceiling, this is just scaling all values by $\frac{1}{\epsilon}$, so this again preserves $\frac{C(\ni)}{SW(\ni)}$ by Lemma~\ref{lem:scale_invar}. Further, $C(\ni'), SW(\ni') \ge \frac{1}{\epsilon}$. If we take the ceiling of each value, the denominator does not decrease and the numerator increases by at most an additive $mn$, so the ratio increases by a multiplicative $(1 + O(\epsilon \cdot mn ))$ for small enough $\epsilon$. For fixed $n,m$, in the limit as $\epsilon \rightarrow 0$, we can ignore this increase in ratio. In other words, $\frac{C(\ni')}{SW(\ni')} \rightarrow \frac{C(\ni)}{SW(\ni)}$ as $\epsilon \rightarrow 0$.

We now create a binary instance  $\ni'' = \langle [n], [m'], V'' \rangle$ as follows. For item $k$, let $q_k = \max_{i \in [n]} v'_{ik}$. Note that $q_k$ is an integer. Create $q_k$ copies of this item as $k_1, k_2, \ldots, k_{q_k}$. For agent $i$, suppose $v'_{ik} = a$, we set $v''_{ik_w} = 1$ for $w \in \{1,2,\ldots,a\}$ and $v''_{ik_w} = 0$ for $w \in \{a+1, \ldots,q_k\}$.  

First note that $SW(\ni'') = SW(\ni')$. This is because item $k$'s contribution to  the RHS is $\max_{i \in [n]} v'_{ik} = q_k$, which is the number of copies of $k$ created in $\ni''$, who each contribute $1$ to the welfare.  Next, consider for each agent $i$ the allocation $x$ corresponding to the LP optimum for $\EFS_i$ in the instance $\ni'$, as written in Eq~(\ref{eq:LP-EFSval-obj}). We use this to construct a feasible solution for the same LP for instance $\ni''$ as follows: For each item $k$ in $\ni'$ and $j \in [n]$, we assign the value $x_{jk}$ to the corresponding variables $x_{jk_w}$ for each copy $k_w$ of item $k$ in the instance $\ni''$. This preserves the objective value, and the LHS and RHS of Eq~(\ref{cons:LP-EFSval-prop}) for each $j$. This means $\EFS_i$ for each agent $i$ cannot decrease in transforming $\ni'$ to $\ni''$. Therefore, $\frac{C(\ni')}{SW(\ni')} \le \frac{C(\ni'')}{SW(\ni'')}$, completing the proof.
\end{proof}

\paragraph{Step 3: Dual Fitting for Binary Utilities.}
\label{sec:step3}
We finally upper bound the ratio $\frac{C(\ni)}{SW(\ni)}$ for  binary utilities. Combined with Theorem~\ref{thm2} and Theorem~\ref{thm1}, this proves Theorem~\ref{thm:main1}.

\begin{theorem}
\label{thm3}
    For any instance of the fair division problem with $n$ agents and binary utilities, 
    $\frac{C(\ni)}{SW(\ni)} = O(\sqrt{n})$,
    where $C(\ni) = \sum_i \EFS_i$ and $SW(\ni)$ is the social welfare.
\end{theorem}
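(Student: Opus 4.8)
The plan is to prove Theorem~\ref{thm3} by a dual-fitting argument on a single large linear program that captures the worst-case ratio $C(\ni)/SW(\ni)$ over all binary instances with $n$ agents. First I would exploit the fact that in a binary instance an item is fully described by its \emph{type} $T \subseteq [n]$, the set of agents who value it: two items of the same type are interchangeable, and by the scale/subdivision invariance of the shares I may treat the total measure $n_T \ge 0$ of type-$T$ items as a continuous variable. The $2^n - 1$ quantities $\{n_T\}_{\emptyset \neq T \subseteq [n]}$ are thus the exponentially many variables, and in this language $SW(\ni) = \sum_T n_T$ since each valued item contributes exactly $1$.

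Second I would rewrite each $\EFS_i$ as an explicit ``grab-and-compensate'' LP. In an optimal envy-free allocation for $i$, agent $i$ never places items it does not value in its own bundle, so $A_i$ is described by grab variables $z_T^{(i)}$ (the amount of type $T \ni i$ that $i$ keeps), giving objective $\EFS_i = \sum_{T \ni i} z_T^{(i)}$. Agent $j \neq i$ then values $i$'s bundle at exactly $e_j^{(i)} = \sum_{T \supseteq \{i,j\}} z_T^{(i)}$, so the envy constraint $u_j(A_i) \le u_j(A_j)$ becomes the requirement that $j$ be compensated by value at least $e_j^{(i)}$ drawn from the leftover supply of types $j$ values. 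Introducing flow variables $f_{T,j}^{(i)}$ (the amount of type $T \ni j$ routed to $j$ in $i$'s solution, for any $i \neq j$) yields the capacity constraints $z_T^{(i)} + \sum_{j} f_{T,j}^{(i)} \le n_T$ and the compensation constraints $\sum_{T \ni j} f_{T,j}^{(i)} \ge \sum_{T \supseteq \{i,j\}} z_T^{(i)}$. Collecting these over all $i$ gives one LP, which is genuinely linear because the valuations are now encoded in the combinatorics of the types rather than as coefficients. Bounding the target ratio then amounts to showing $\sum_{i}\sum_{T \ni i} z_T^{(i)} \le \lambda \sum_T n_T$ for all feasible $(n,z,f) \ge 0$ with $\lambda = O(\sqrt n)$.

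Finally I would exhibit such a $\lambda$ by dual fitting: attach a multiplier $a_T^{(i)} \ge 0$ to each capacity constraint and $b_{ij} \ge 0$ to each compensation constraint and take the nonnegative combination. Matching coefficients (Farkas) reduces the task to finding $a,b \ge 0$ with $a_T^{(i)} + \sum_{j \in T \setminus \{i\}} b_{ij} \ge 1$ for every $T \ni i$, with $a_T^{(i)} \ge b_{ij}$ for every routed flow variable, and with $\sum_i a_T^{(i)} \le \lambda$ for each type $T$. I would use the symmetric guess $b_{ij} \equiv b$ and $a_T^{(i)}$ depending only on $t = |T|$ and on whether $i \in T$, and set $b = 1/\sqrt n$; a short calculation shows the per-type load $\sum_i a_T^{(i)}$ is maximized around $|T| \approx \sqrt n$ at value $O(\sqrt n)$, giving $\lambda = O(\sqrt n)$.

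The main obstacle I anticipate is not the final one-dimensional optimization over $t$ but getting the LP exactly right, in two respects. First, I must verify that the grab-and-compensate LP truly equals $\EFS_i$: that it suffices to route each flow $f_{T,j}^{(i)}$ only into agents who value the type, yet that I must still allow agent $i$ to route types it does \emph{not} itself value ($i \notin T$), since those are precisely the resources available to defuse envy and hence must carry nonzero duals. Second, I must keep the coefficient bookkeeping in the nonnegative combination correct so that the $n_T$ terms aggregate into the clean per-type constraint $\sum_i a_T^{(i)} \le \lambda$. Once the constraint set is pinned down, the symmetric dual solution with $b = 1/\sqrt n$ is the natural fit and delivers the $O(\sqrt n)$ bound, completing the proof of Theorem~\ref{thm3} and hence, via Theorem~\ref{thm2} and Theorem~\ref{thm1}, of Theorem~\ref{thm:main1}.
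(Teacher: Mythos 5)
Your proposal matches the paper's proof essentially step for step: the paper likewise indexes items by the subset $S$ of agents who value them, writes a single LP whose variables are the grab amounts $z_{iiS}$ (for $S \in B_i$) and the compensation flows $z_{ijS}$ (only for $S \in B_j$), imposes exactly your compensation constraints $\sum_{S \in B_i \cap B_j} z_{iiS} \le \sum_{S \in B_j} z_{ijS}$ and per-type capacity constraints, and then fits the same symmetric dual --- $\eta_{ij} = 1/\sqrt{n}$ for all $i \neq j$, $\beta_{iS} = 1/\sqrt{n}$ when $i \notin S$, and $\beta_{iS} = 1/\sqrt{n} + \max\bigl(0, 1 - (|S|-1)/\sqrt{n}\bigr)$ when $i \in S$ --- giving $\sum_i \beta_{iS} \le 2\sqrt{n} + 1$. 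The only (cosmetic) difference is bookkeeping: you hold the type measures $n_T$ fixed and argue via a Farkas-style nonnegative combination, whereas the paper normalizes the welfare, treats the type weights $v_S$ as LP variables with $\sum_S v_S \le 1$, and takes a formal LP dual; the resulting system of dual constraints is identical to yours.
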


\begin{proof}
Let $\ni = \langle [n], [m], V \rangle$ be a binary instance.
We assume that every item is valued by at least one agent, since we can discard any unvalued items without affecting the problem instance.
We first scale the values down by $m$, so that the total welfare is $1$. By Lemma~\ref{lem:scale_invar}, this scales down $C(\ni)$ by the same factor. For agent $i$, let $Q_i = \{k \mid v_{ik} = 1/m\}$.

We write the ratio $\frac{C(\ni)}{SW(\ni)}$ as a linear program:
\begin{align*}
    \max \qquad & \sum_{i \in [n]} \sum_{k \in Q_i}  x_{iik} \,, \\
    \textrm{s.t.} \qquad & \sum_{k \in Q_i \cap Q_j} x_{iik} \le \sum_{k \in Q_j} x_{ijk} \,, \ \ \forall \, i \neq j \,; \\
    & \sum_{j \mid k \in Q_j} x_{ijk} \le 1 \,, \ \ \forall\, i \in [n], \ k \in [m] \,; \\
    & x_{ijk} \ge 0 \,, \ \ \forall \, i,j \in [n], \ k \in Q_j \,.
\end{align*}

In the above LP, the variables $\{x_{ijk}\}$ for any $i$ correspond to the variables from the LP for $\EFS_i$ in Eq~(\ref{eq:LP-EFSval-obj}). We simply take the intersection of the constraints for different $i$ and add up the objectives.
It suffices to only have variables $x_{ijk}$ for items $k \in Q_j$, since it is never beneficial to allocate items to agents who do not value them.

We need to find the worst case of the above LP over sets $\{Q_i\}$. Towards this end, index items by the subset $S$ of agents that have value $1/m$ for that item. Note that multiple items could correspond to the same subset. If $b$ items correspond to the same subset $S$, then we set the weight $v_S = b / m$. Since the social welfare is $1$, we have $\sum_S v_S = 1$. Let $B_i = \{S \subseteq [n] \mid i \in S\}$ denote the set of subsets that contain agent $i$,
and $x_{ijS}$ denote the variable $x_{ijk}$ corresponding to the item $k = S$ in the LP formulation above. The above LP is now the following:
\begin{align*}
    \max \qquad & \sum_{i \in [n]} \sum_{S \in B_i}  v_S \cdot x_{iiS} \,, \\
    \textrm{s.t.} \qquad & \sum_{S \in B_i \cap B_j} v_S \cdot x_{iiS} \le  \sum_{S \in B_j} v_S \cdot x_{ijS} \,,  \ \  \forall \, i \neq j \,; \\
    & \sum_{j \mid S \in B_j} x_{ijS} \le 1 \,,   \ \  \forall\, i \in [n], \ S \subseteq [n] \,; \\
    & x_{ijS} \ge 0 \,,  \ \  \forall \, i,j \in [n], \ S \subseteq [n] \,.
\end{align*}

We now need to find the worst case value of the above LP over non-negative $\{v_S\}$ that satisfy $\sum_S v_S = 1$. Replace $z_{ijS} = v_S \cdot x_{ijS}$. Substituting in the above LP with the added condition $\sum_S v_S \le 1$, and treating both $\{z_{ijS}\}$ and $\{v_S\}$ as variables, we obtain the following LP that upper bounds $\frac{C(\ni)}{SW(\ni)}$ over all binary instances with $n$ agents.

\begin{align*}
    \max \qquad & \sum_{i \in [n]} \sum_{S \in B_i} z_{iiS}   \,, \\
    \textrm{s.t.} \qquad & \sum_{S \in B_i \cap B_j} z_{iiS} \le \sum_{S \in B_j}  z_{ijS} \,,
    \ \ \forall\, i \neq j  \,; \\
    &\sum_{j \mid S \in B_j} z_{ijS} \le v_S \,, \ \ \forall\, i \in [n], \ S \subseteq [n] \,;\\
    & \sum_{S} v_S \le 1 \,; \\
    & z_{ijS}, v_{S} \ge 0 \,, \ \  \forall \, i,j \in [n], \ S \subseteq [n] \,.
\end{align*}
We upper bound the optimal solution by finding one feasible dual solution.
Taking the dual, we obtain
    \begin{align*}
        \min \qquad &\lambda \,,  \\
        \textrm{s.t.} \qquad & \beta_{iS} + \sum_{j \neq i \mid S \in B_i \cap B_j} \eta_{ij} \ge 1\,,
        \ \ \forall\, i, \ S \in B_i\,;  \\
        & \beta_{iS} \ge \eta_{ij}\,, \ \ \forall\, i \neq j, \ S \in B_j\,; \\
        &  \sum_{i} \beta_{iS} \le \lambda\,,
        \ \ \forall\, S \subseteq [n]\,; \\
        & \eta_{ij}, \beta_{iS}, \lambda \ge 0\,, \ \ \forall \, i,j \in [n], \ S \subseteq [n] \,. 
    \end{align*}
We now set the dual solution as follows: Let $\gamma = \frac{1}{\sqrt{n}}$. Set $\eta_{ij} = \gamma$ for all $i \neq j$. For $S \notin B_i$, set $\beta_{iS} = \gamma$. For $S \in B_i$ with $|S| = q+1$, set 
$$ \beta_{iS} = \gamma + \max(0, 1 - q \cdot \gamma).$$
This setting satisfies the first and second constraints of the dual. Note now that for any $S$ with $|S| = q+1$, we have
$$ \sum_i \beta_{iS} \le n \cdot \gamma + (q+1) \cdot \max(0, 1 - q \cdot \gamma) \le 2 \sqrt{n} + 1.$$
Therefore, the dual solution has value $O(\sqrt{n})$, which by weak duality upper bounds the optimal primal solution. This in turn upper bounds $\frac{C(\ni)}{SW(\ni)}$, completing the proof.
\end{proof}

\section{Lower Bound: Proof of Theorem~\ref{thm:main11}}
\label{sec:lb1}
We complement Theorem~\ref{thm:main1} by showing a lower bound of $\Omega(\sqrt{n})$ on $\alpha(n,\cdot)$ for $\CCS$. In other words, there is an instance  $\ni = \langle [n], [m], V \rangle$, where for any allocation, some agent's utility is $O(1/\sqrt{n})$ times its cake-cutting share. By Lemma~\ref{lem:order}, this will show a corresponding lower bound for $\EFS$ and the intermediate notion of $\EF$.

Towards this end, all we need to show is an instance $\ni = \langle [n], [m], V \rangle$ with a corresponding lower bound on $\frac{C(\ni)}{SW(\ni)}$ where $C(\ni) = \sum_{i=1}^n \CCS_i$. To see this, note that for any allocation, the sum of agents' utilities is at most $SW(\ni)$. If $C(\ni) \ge \alpha \cdot SW(\ni)$, then there must be some agent whose $\CCS_i$ is at least $\alpha$ times larger than its utility.

Our instance has binary utilities with $n = q^2 + q + 1$ agents and $m = 2n - q - 1$ items, where $q$ is a large prime.
We first have a set $U$ of $n-q - 1$ items where $v_{ik} = 1$ for all agents $i \in [n]$ and $k \in U$.
The remaining set $C$ of $n$ items is induced by a finite projective plane, where for any agent $i \in [n]$,
there are exactly $q+1$ items $k \in C$ with value $v_{ik} = 1$.
In addition, for any distinct agents $i$ and $j$, there is exactly one item
$k \in C$ such that $v_{ik} = v_{jk} = 1$.
Finally, for any item $k \in C$, there are exactly $q+1$ agents $i \in [n]$ such that $v_{ik} = 1$.
A valuation system satisfying the above properties can be constructed by considering the finite projective plane of order $q$, and interpreting lines as agents and points as items; see~\cite{dembowski_1968}.

We start with the LP below capturing the ratio $\frac{C(\ni)}{SW(\ni)}$.
This is written analogously to the LP in Section~\ref{sec:step3}, but using the $\CCS$ LP (Lemma~\ref{lem:ccs_compute}) instead of the $\EFS$ one:
\begin{align}
    \max \qquad & \sum_{i \in [n]} \sum_{S \in B_i} v_S \cdot x_{iS} \notag \,, \\
    \textrm{s.t.} \qquad & \sum_{S \in B_i \cap B_j} v_S \cdot x_{iS} \le \frac{\sum_{S \in B_j} v_S}{n} \,,
    \ \ \forall\, i \neq j  \label{eq:LP-plane-cons} \,;  \\
    & 0 \le x_{iS} \le 1 \,, \ \  \forall \, i \in [n], \ S \in A_i \notag \,.
\end{align}

We need one feasible solution to the above LP to exhibit a lower bound.
We first calculate the values $v_S$ for all sets $S$.
For all items $k$, let $L_k = \{i \in [n] \mid v_{ik} = 1\}$ be the set of agents who value item $k$.
Note that for all $k \in C$, we have $|L_k| = q+1$ and $v_{L_k} = 1/m$.
It is also easy to verify that $v_{[n]} = (n-q - 1)/m$, and $v_{S'} = 0$ for all other sets $S'$.
Together, this implies that $\sum_{S \in B_i} v_S = n/m$ for any agent $i \in [n]$.

To find a feasible solution, we set $x_{iL_{k}} = 1$ for all agents $i \in [n]$ and items $k \in C$.
All other variables are set to $0$.
We now analyze constraint~\ref{eq:LP-plane-cons}.
For any distinct agents $i$ and $j$, there is exactly one item $k \in C$ such that $i \in L_k$ and $j \in L_k$, so the left-hand side of constraint~\ref{eq:LP-plane-cons} evaluates to $1/m$.
The right-hand side of constraint~\ref{eq:LP-plane-cons} also evaluates to $1/m$, which implies the constraints are satisfied.
Analyzing the objective function, we have
\[
    \frac{C(\ni)}{SW(\ni)} \ge \sum_{i \in [n]} \sum_{S \in B_i} v_S \cdot x_{iS}
    = \sum_{i \in [n]} \frac{\lvert \{k \in C \mid i \in L_k\} \rvert}{m}
    = \frac{n(q+1)}{m} = \Theta(\sqrt{n})
    .\]
This completes the proof of Theorem~\ref{thm:main11}.


%

\section{$\alpha(\cdot,m)$ for Cake Cutting Share: Proof of Theorem~\ref{thm:main2}}
\label{sec:main2}
In this section, we prove Theorem~\ref{thm:main2} for the cake-cutting shares $\{\CCS_i\}$. In particular, we show sub-linear bounds on $\alpha(\cdot, m)$ as a function of $m$, the number of items.

\subsection{Upper Bound on $\alpha(\cdot, m)$} 
Unlike the dual fitting approach described above, we will present an explicit algorithm that achieves $\alpha(\cdot,m) = O( m^{2/3})$.

Given an instance $\ni = \langle [n],[m],V \rangle$, let $\gamma_i = u_i([m]) = \sum_{k \in [m]} v_{ik}$.
We create a new instance $\ni' = \langle [n],[m],V' \rangle$, where $v'_{ik} = v_{ik}/\gamma_i$ for all $i \in [n]$ and $k \in [m]$.
Recall from Lemma~\ref{lem:ccs_compute} that $\CCS'_i$ is the solution to the following LP:

\begin{align*}
    \max \qquad &\sum_{k \in [m]} v'_{ik} \cdot x_k \,,  \\
    \textrm{s.t.} \qquad &\sum_{k \in [m]} v'_{jk} \cdot x_k  \le \frac{1}{n} \,, \ \ \forall\, j \in ([n] \setminus \{i\})\,;\\
    & 0 \le x_k \le 1 \,, \ \ \forall\, k \in [m]  \,.
\end{align*}


\paragraph{Step 1: Large-valued Items.} For every item $k$, let $w_k=\max_i v'_{ik}$ denote the largest value of any agent for this item, and let $g_k$ be any $i$ s.t. $v'_{ik}=w_k$. Our algorithm uses two parameters $a,b \in (0,1)$ that we choose later.  For each agent, we separate the items into those with ``large'' value relative to the corresponding $w_k$ and the rest.   Let $S_i=\{k\mid v'_{i k}\ge a \cdot w_k\}$ be the set of items with large values for agent $i$. 

\paragraph{Step 2: Minimal Cover} We next attempt to cover the items by the sets $\{S_i\}$ so that each set  $S_i$ covers at least $b \cdot m$ distinct items. This is achieved by Algorithm~\ref{alg1}. Notice that when the algorithm terminates, $|T|\le 1/b$ since there are only $m$ items in total, and each $S_i$ disjointly covers at least $b \cdot m$ items.

\begin{algorithm}[H]
\begin{algorithmic}[i]
  \STATE $T=\varnothing$
  \FOR{$i$ in $[n]$}    
    \IF{$|S_i\backslash (\bigcup_{j\in T}\tilde{S_j})|\ge b \cdot m$}
        \STATE $\tilde{S_i}=S_i\backslash (\bigcup_{j\in T}\tilde{S_j})$
        \STATE $T=T\cup \{i\}$
    \ENDIF
  \ENDFOR
  \end{algorithmic}
    \caption{Constructing  the Minimal Cover.}
    \label{alg1} 
\end{algorithm}

\paragraph{Step 3: Allocation.}
The final allocation has three parts:

\begin{enumerate}
    \item  For each item $k$, set $x_{g_kk} = \frac{1}{3}$, {\em i.e.}, allocate to the highest-valued agent for this item.
    \item For each $i \in T$, and $k \in \tilde{S_i}$, add $\frac{1}{3}$ to $x_{ik}$, {\em i.e.}, allocate to the items in the cover.
    \item Finally, allocate the proportional solution, that is, for each $i \in [n], k \in [m]$, add $\frac{1}{3n}$ to $x_{ik}$.
\end{enumerate}

\paragraph{Analysis.} 
The total utility of agent $i$ in the above algorithm is clearly
$$ALG_i=\frac{1}{3}\left( \frac{1}{n}+\sum_{k \mid i=g_k}v'_{g_kk}+\sum_{k\in \tilde{S_i}}v'_{ik}\right).$$

We now bound $\CCS'_i$ to align with the above expression. For this, consider the LP above, and let $x^*_{k}$ denote the value of $x_k$ in the optimal solution corresponding to $\CCS'_i$. Therefore, $\CCS'_i = \sum_k v'_{ik} x^*_{k}$.  We account for this quantity by splitting the items into three groups.

\begin{enumerate}
\item We first consider the items $k$ for which $i = g_k$ and the items $\tilde{S_i}$ if this set is non-empty. Let the union of these items be $B_i$. Since $x^*_k \le 1$, the total contribution to $\CCS'_i$ is at most
$ \sum_{k \in B_i} v'_{ik}.$
We ignore these items from further consideration, noting that their contribution to $\CCS'_i$ is within a factor of $3$ of their corresponding contribution to $ALG_i$. 
\item Next consider the items with large value that are covered by other sets in Algorithm~\ref{alg1}. In other words the set $H_i = S_i\cap (\bigcup_{j\in T, j\not= i}\tilde{S_j})$.  Notice that the number of agents in this cover is at most $1/b$. For any item $k \in H_i$, suppose $k$ is covered by the set $\tilde{S_j}$. Since $v'_{ik}, v'_{jk} \in [a \cdot w_k, w_k]$, we have $v'_{ik} \le v'_{jk} / a$.  By the LP constraint for $\CCS_i$, we have $\sum_{k \in H_i \cap \tilde{S_j}} v'_{jk} x^*_k \le 1/n$. This means $\sum_{k \in H_i \cap \tilde{S_j}} v'_{ik} x^*_k \le 1/(an).$ Since $|T| \le 1/b$, this means 
$$\sum_{k \in H_i} v'_{ik} x^*_k \le \frac{1}{a\cdot b \cdot n}.$$
\item Next consider the large value items that are not covered by Algorithm~\ref{alg1}, that is, the set $J_i = S_i\backslash (\bigcup_{j\in T_2}\tilde{S_j})$. Notice that if this set is non-empty, then $\tilde{S_i} = \emptyset$ and further, $|J_i| \le b \cdot m$. For any $k \in J_i$, consider agent $g_k$ and its LP constraint in $\CCS'_i$. By assumption, $i \neq g_k$. We have $v'_{g_k k} x^*_k \le 1/n $. Since $v'_{ik} \le v'_{g_k k} = w_k$, we have $v'_{i k} x^*_k \le 1/n $. Summing over all items in $J_i$, we have 
$$\sum_{k \in J_i} v'_{ik} x^*_k \le \frac{b\cdot m}{n}.$$ 
\item Lastly, consider the small-value items, that is, the set $\overline{S_i}$. Since each item $k \in \overline{S_i}$ has value $v'_{ik}\le a \cdot w_k$, and since $v'_{g_k k} x^*_k \le 1/n  $, we have $ v'_{i k} x^*_k \le a/n$. Note that $i \neq g_k$. As there are at most $m$ such items, we have 
$$ \sum_{k \in \overline{S_i}} v'_{ik} x^*_k \le\frac{a \cdot m}{n}.$$
\end{enumerate}

Adding up the above contributions, we have:
$$\CCS'_i \le \left(\sum_{k:i=g_k}v'_{g_kk}+\sum_{k\in \tilde{S_i}}v'_{ik} \right) +\frac{1}{n}\left(\frac{1}{a \cdot b}+(a +b) \cdot m\right).$$

Setting $a=b=m^{-\frac{1}{3}}$, this algorithm achieves $\alpha(\cdot,m) \le 3m^{\frac{2}{3}}$. The factor of $3$ can be improved to $(1+\epsilon)$ for any constant $\epsilon > 0$ by slightly modifying the algorithm, and we omit the details.

\subsection{Lower Bound on $\alpha(\cdot,m)$}
Note that the lower bound construction in Section~\ref{sec:lb1} had $m = \Theta(n)$ items.
This implies that $\alpha(\cdot,m) = \Omega(\sqrt{m})$, completing the proof of Theorem~\ref{thm:main2}.

\section{Approximation Bounds for $\pEFS$: Proof of Theorem~\ref{thm:main3}}
\label{app:partial}

\subsection{Upper Bound on Approximation}
Suppose that each agent $i$ does not know the valuations of agents in subset $W_i$, where $|W_i| = \lfloor \frac{n-1}{\Delta}\rfloor$. In this part, we assume agent $i$ chooses the best possible $W_i$ to maximize $\pEFS_i(\ni,W_i)$, and show an upper bound for this setting.  This implies an upper bound for $\pEFS_i(\ni)$. Let $Z_i = W_i \cup \{i\}$ and denote $|Z_i| = Z$ for all $i \in [n]$. Note that we assume all agents in $Z_i$ have the same utility function as agent $i$.

First note that $\pEFS_i$ for a set $W_i$ is the solution to the following LP:

\begin{align}
        \max \qquad &\sum_{k \in [m]} v_{ik} \cdot x_{ik}   \,,  \label{eq:LP-pEFSval-obj} \\
        \textrm{s.t.} \qquad &\sum_{k \in [m]} v_{jk} \cdot x_{ik}  \le \sum_{k \in [m]} v_{jk} \cdot x_{jk}\,, \ \ \forall\, j \in [n] \setminus Z_i \,;  \\
        & \sum_{j \in [n] \setminus Z_i} x_{jk} + |Z_i| \cdot x_{ik} \le 1 \,, \ \ \forall\, k \in [m] \,; \notag \\
        &  x_{jk} \ge 0 \,, \ \  \forall\, j \in [n] \setminus W_i, \ k \in [m]  \,. \notag
\end{align}


As in Section~\ref{sec:main1}, we reduce the problem of upper bounding the worst-case approximation ratio to the case with binary valuations; Theorems~\ref{thm1} and~\ref{thm2} apply as is. Proceeding as in the proof of Theorem~\ref{thm3}, the analogous LP formulation using the formulation for $\pEFS_i$ from above is: 
\begin{align*}
    \max \qquad & \sum_{i \in [n]} \sum_{S \in B_i}  v_S \cdot x_{iiS} \,, \\
    \textrm{s.t.} \qquad & \sum_{S \in B_i \cap B_j} v_S \cdot x_{iiS} \le  \sum_{S \in B_j} v_S \cdot x_{ijS} \,,  \ \  \forall \, i \in [n], \ j \in [n] \setminus Z_i \,; \\
    & \sum_{j \mid S \in B_j, j \notin Z_i} x_{ijS} + |Z_i| \cdot x_{iiS} \le 1 \,,   \ \  \forall\, i \in [n], \ S \subseteq [n] \,; \\
    & x_{ijS} \ge 0 \,,  \ \  \forall \, i,j \in [n], \ S \subseteq [n] \,.
\end{align*}
Replacing $x_{ijS} \cdot v_S$ by $z_{ijS}$ and introducing $\{v_S\}$ as variables we have: 
\begin{align*}
    \max \qquad & \sum_{i \in [n]} \sum_{S \in B_i} z_{iiS} \,, \\
    \textrm{s.t.} \qquad & \sum_{S \in B_i \cap B_j} z_{iiS} \le  \sum_{S \in B_j} z_{ijS} \,,  \ \  \forall \, i \in [n], \ j \in [n] \setminus Z_i \,; \\
    & \sum_{j \mid S \in B_j, j \notin Z_i} z_{ijS} + |Z_i| \cdot z_{iiS} \le v_S \,,   \ \  \forall\, i \in [n], \ S \subseteq [n] \,; \\
    & \sum_{S} v_S \le 1 \,;\\
    & z_{ijS} \ge 0 \,,  \ \  \forall \, i,j \in [n], \ S \subseteq [n] \,.    
\end{align*}

Taking the dual, we get:
\begin{align*}
    \min \qquad &\lambda \,,  \\
    \textrm{s.t.} \qquad & |Z_i| \cdot \beta_{iS} + \sum_{j \mid j \in [n] \setminus Z_i, S \in B_i \cap B_j} \eta_{ij} \ge 1\,,
    \ \ \forall\, i, \ S \in B_i\,; \tag{constraints for $z_{iiS}$}  \\
    & \beta_{iS} \ge \eta_{ij}\,, \ \ \forall\, i \in [n]\,, \ j \in [n] \setminus Z_i \,, \ S \in B_j\,; \tag{constraints for $z_{ijS}$}\\
    &  \sum_{i} \beta_{iS} \le \lambda\,,
    \ \ \forall\, S \subseteq [n]\,; \tag{constraints for $v_S$}\\
    & \eta_{ij}, \beta_{iS}, \lambda \ge 0\,, \ \ \forall \, i,j \in [n], \ S \subseteq [n] \,. 
\end{align*}

Set all $\eta_{ij} = \gamma = \frac{1}{\sqrt{nZ}}$. If $S\notin  B_i$, set $\beta_{iS} = \gamma$. If $S \in B_i$, let $q_{iS} = |S\setminus Z_i|$ and set $\beta_{iS} = \gamma + \max\{\frac{1-\gamma \cdot q_{iS}}{|Z_i|}, 0\}$. The first two constraints in the above LP directly hold. We only need an upper bound on $\lambda$ to make the third constraint hold. 

We have
\begin{align*}
    \sum_{i} \beta_{iS} &\le n \cdot \gamma + \sum_{i \in S} \max \left\{0, \frac{1 - \gamma \cdot q_{iS}}{|Z_i|}\right\} \\
    &\le n \cdot \gamma + \sum_{i \in S} \max \left\{0, \frac{1 - \gamma \cdot (|S| - |Z_i|)}{|Z_i|}\right\} \\
    &\le  n \cdot \gamma + \max \left\{0, \frac{|S| - \gamma \cdot (|S| - Z) \cdot |S|}{Z}\right\} \\
    &\le  n \cdot \gamma + \max \left\{0, \frac{(\gamma \cdot Z + 1)^2}{4\gamma \cdot Z}\right\} \\
    & \le 2 \sqrt{n/Z} \tag{Since $\gamma = \frac{1}{\sqrt{n \cdot Z}}$}.
\end{align*}
Therefore, $\sum_i \pEFS_i$ is at most $2\sqrt{n/Z}$ times the social welfare. This completes the proof of the upper bound of $O(\sqrt{\Delta})$ on the approximation ratio.

\subsection{Lower Bound on Approximation}
Recall that $Z = |W| + 1 = 1 + \lfloor \frac{n-1}{\Delta} \rfloor$.
Our instance will have $n$ agents and $\binom{n}{\ell}$ items where $\ell = \frac{\sqrt{nZ}}{2}$.
We simply have one item $S$ for each subset of $\ell$ agents, with $v_{iS} = 1$ for $i \in S$  and $v_{iS} = 0$ otherwise.

Focus on some $i$ and fix any set $W_i$ of size $Z-1$. We compute a lower bound on $\pEFS_i$ by showing an envy-free allocation. We separate the set of items into two parts -- those contained in $B_i$ and the rest (call this set $T$). There are $\binom{n-1}{\ell-1}$ items in set $B_i$; for $S \in B_i$, set $x_{jS} = \frac{1}{Z}$ for $j \in Z_i$, and $x_{jS} = 0$ otherwise. For $S \in T$, set $x_{jS} = \frac{1}{\ell}$ if $j \in S$, and $x_{jS} = 0$ otherwise. This allocation is feasible in that for any item, the total allocated is at most $1$.

Clearly the allocation for $j \in Z_i$ is the same as the allocation for $i$. For $j \notin Z_i$, we have $|B_j \cap B_i| = {n-2 \choose \ell-2}$ and $|B_j \setminus B_i| = {n-1 \choose \ell-1} - {n-2 \choose \ell-2} = \frac{n-\ell}{\ell - 1} \cdot {n-2 \choose \ell-2}$. We now show that $j$ does not envy $i$'s allocation. We have 
$$ u_j(A_j) = \frac{|B_j \setminus B_i|}{\ell} = \frac{1}{\ell} \cdot \frac{n-\ell}{\ell - 1} \cdot {n-2 \choose \ell-2} \ge \frac{4(n-\ell)}{n \cdot Z} \cdot {n-2 \choose \ell-2} > \frac{1}{Z} \cdot {n-2 \choose \ell-2} = u_j(A_i), $$
where the final inequality follows since $3n > 2n \ge 4 \ell$. Using this allocation, 
$$\pEFS_i \ge u_i(A_i) \ge \frac{1}{Z} \cdot {n-1 \choose \ell-1}.$$

Considering all $i$ together, the social welfare is simply the number of items, which is ${n \choose \ell}$, while 
$$\sum_i \pEFS_i \ge \frac{n}{Z} \cdot {n-1 \choose \ell-1} = \frac{n}{Z} \cdot \frac{\ell}{n} {n \choose \ell} = \sqrt{\frac{n}{4Z}} \cdot  {n \choose \ell}.$$
Therefore, there exists an agent whose $\pEFS_i$ cannot be approximated below a $\Omega\big(\sqrt{\Delta} \big)$ factor, completing the proof of the lower bound.

\section{Empirical Results}
\label{sec:expt}
Given the $\Theta(\sqrt{n})$ approximation bounds for $\CCS$ and $\EFS$, one may wonder which shares are more realistic to target in practice. We therefore empirically measure the approximation ratio $\theta$ to $\CCS$ and $\EFS$ achievable on simulated and real data. In the same vein, we measure how the approximation ratio for $\pEFS$ (see Definition~\ref{def:pefs}) changes with the parameter $\Delta$. (Recall the definition of $\theta$ from Section~\ref{sec:approx}.) 


In the simulated model, we set $n=25$ and $m = 3n = 75$ for all instances. We randomly generate $200$ fair division instances and compute the approximation guarantee given by each fairness notion using the LP in 
Eq~(\ref{lp:approx}). We use the setting in \citet{Caragiannis2019nash} and set $\vec{v_i}$ to be a uniformly random integer $m$-partition that sums to $1000$. 

The real data uses the Yahoo A1 dataset \cite{Yahoo} which contains bid information for advertisers (agents) on a set of ads (items). We generate $200$ instances by selecting a random set of $m = 20$ ads for each instance, and taking the $n=20$ advertisers with the most bids on those ads. Agent $i$'s utility for an item $j$ is set to be advertiser $i$'s bid for ad $j$.

The approximation ratios $\theta$ for $\PROP$, $\CCS$, and $\EFS$ are plotted in Figures~\ref{fig:uniform_sum_to_1000} and \ref{fig:yahoo}, for the simulated and real datasets respectively.  In our instances, the $\theta$ for $\CCS$ is approximately one, which suggests that $\CCS$ is a reasonable notion to target in practice. In contrast, the $\theta$ for $\PROP$ is much higher than one, suggesting it underestimates shares, while those for $\EFS$ are much lower than one, which suggests it cannot be simultaneously achieved for all agents. For the Yahoo dataset, though $\CCS$ and $\EFS$ seem equally reasonable, $\CCS$ yields feasible shares while $\EFS$ does not.

For computing $\pEFS$, we use the same setup as before. For each instance, we sample $20$ sets $W_i$ of size $\lfloor \frac{n-1}{\Delta}\rfloor$ to estimate the expectation in Definition~\ref{def:pefs}. We plot the average $\theta$ over the $200$ instances as a function of $\Delta$ for $\pEFS$ in Figures~\ref{fig:uniform2} and~\ref{fig:yahoo2}, for the simulated and real datasets respectively. Note that the approximation ratio smoothly decreases with $\Delta$. For both the datasets, in order for the ratio to be approximately $1$,  the average size of $W_i$ of agents whose value is unknown to $i$ is approximately $6$, which is between a quarter and a third of the number of agents. 


\begin{figure*}[hbt!]
    \begin{subfigure}[t]{0.245\textwidth}
        \centering
        \includegraphics[width=\linewidth]{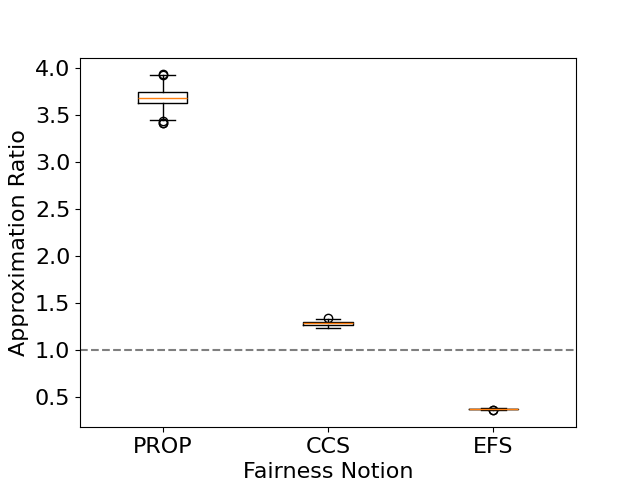}
        \caption{Uniform Value}
        \label{fig:uniform_sum_to_1000}
    \end{subfigure}
    \begin{subfigure}[t]{0.245\textwidth}
        \centering
        \includegraphics[width=\linewidth]{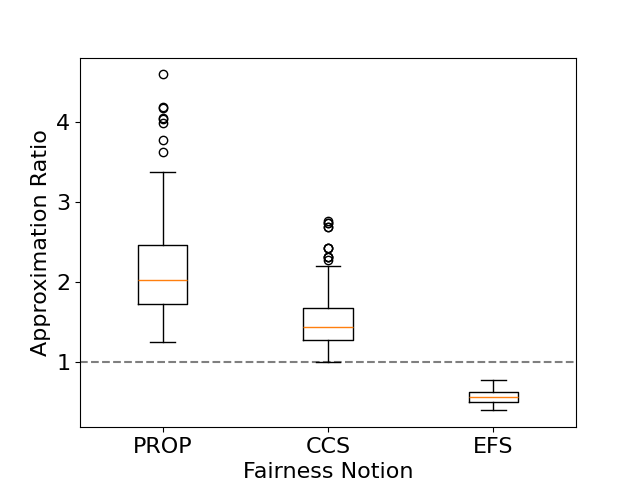}
        \caption{Yahoo Data}
        \label{fig:yahoo}
     \end{subfigure}
    \begin{subfigure}[t]{.245\textwidth}
        \centering
        \includegraphics[width=\linewidth]{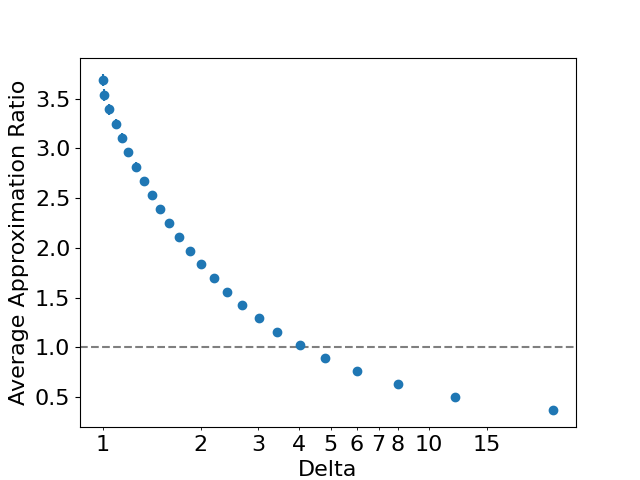}
        \caption{Uniform Value: $\pEFS$}
        \label{fig:uniform2}
    \end{subfigure}
    \begin{subfigure}[t]{0.245\textwidth}
        \centering
        \includegraphics[width=\linewidth]{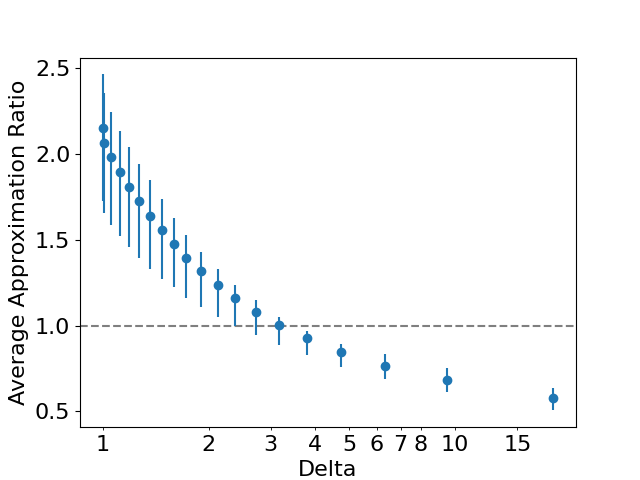}
        \caption{Yahoo Data: $\pEFS$}
        \label{fig:yahoo2}
    \end{subfigure}
    \caption{Figures (a), (b): Box plot of approximation Ratios given by $\PROP$, $\CCS$, and $\EFS$. Figures (c), (d): Plot of mean and inter-quartile range of approximation ratio as a function of $\Delta$ for $\pEFS$.}
\end{figure*}

\paragraph{Additional Simulated Models.} We now present additional empirical results for the approximation ratio of $\CCS$ and $\EFS$. As before, we set $n=25$ and $m = 3n = 75$ for all instances. For each model of agent utilities, we randomly generate $200$ fair division instances and compute the approximation guarantee given by each fairness notion using the LP in 
Eq~(\ref{lp:approx}).

In the Bernoulli model, each agent's utility for each item is drawn from a Bernoulli distribution with parameter $p=.5$, i.e., for all $i \in [n]$ and $k \in [m]$, we set $v_{ik} = 1$ with probability $0.5$ and $v_{ik} = 0$ otherwise. 
In the Intrinsic Value model, we simulate a situation where every item has an intrinsic value and each agent has an additional value for each item which captures its personal preferences. Specifically, for each item $k \in [m]$, we first sample $\alpha_k \sim U(0, 1)$, which captures the intrinsic value of item $k$.
Then for all $i \in [n]$ and $k \in [m]$, we sample $\beta_{ik} \sim U(0, 0.3)$, which captures each agent's personal preferences.
We set $v_{ik} = \alpha_k + \beta_{ik}$.

The $\theta$ for $\PROP$, $\CCS$, and $\EFS$ are shown in Figures~\ref{fig:bernoulli} and~\ref{fig:intrinsic_value} respectively. As before, the $\CCS$ share has approximation ratio closest to $1$ and is feasible to achieve simultaneously for all agents. This shows our conclusions are fairly robust to agent utility models.

\begin{figure*}[htbp]
    \centering
    \begin{subfigure}[t]{.25\textwidth}
        \centering
        \includegraphics[width=\linewidth]{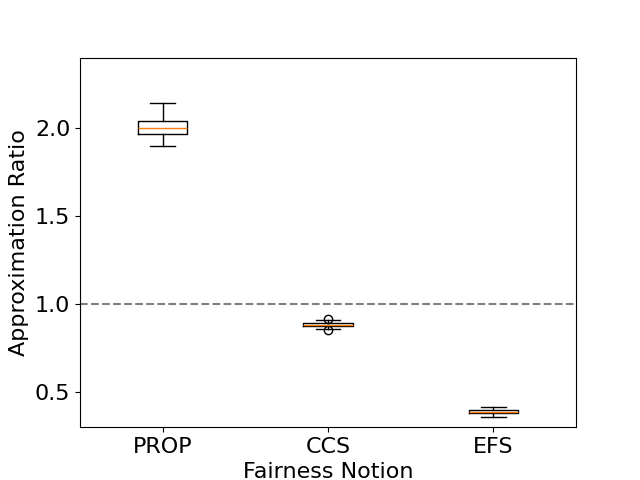}
        \caption{Bernoulli Value.}
        \label{fig:bernoulli}
    \end{subfigure}
    \qquad
    \qquad
    \begin{subfigure}[t]{0.25\textwidth}
        \centering
        \includegraphics[width=\linewidth]{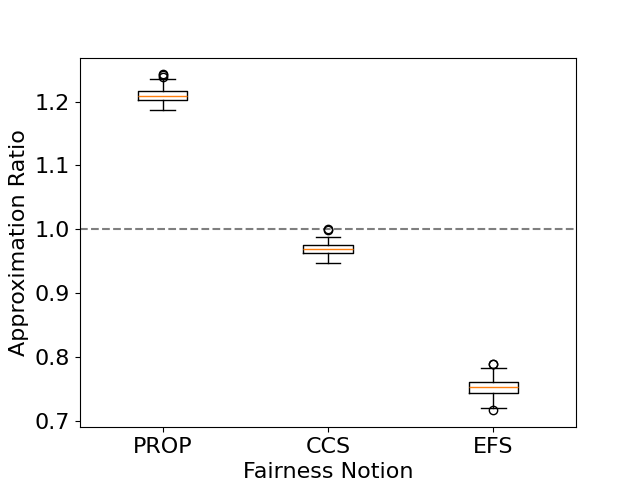}
        \caption{Intrinsic Value.}
        \label{fig:intrinsic_value}
    \end{subfigure}
    
    \caption{Approximation Ratios given by $\PROP$, $\CCS$, and $\EFS$ for Bernoulli distribution and Intrinsic Value.}
\end{figure*}
\section{Conclusion}
In this paper, we have presented fair share concepts that are appealing for three reasons. First, unlike proportionality, they are sensitive to the utility functions of other agents, so their value is more representative of the true fair share achievable by an agent. Second, on any instance, these shares, as well as the allocation that achieves the optimal approximation to them, can be efficiently computed. Finally, there are allocations that achieve non-trivial worst-case approximation (that is, an approximation factor of $o(n)$) to these shares. This makes these shares a meaningful concept to target in practice, and we support the final claim via experimental results.

Analogous to \citet{Babaioff}, our work begins exploring notions of fair shares that go beyond proportionality. The main open question stemming from our work is whether there are other notions of fair shares that are intermediate between being oblivious to other agents' utility functions (like proportionality), and those that fully consider their utilities (like our shares). 
It would also be interesting to develop analogs of our concepts for more general monotone utility functions, as well as for indivisible goods, and study their approximation bounds.
\section*{Acknowledgments}
We thank Noga Alon for suggesting an improved lower bound construction in Section~\ref{sec:lb1}.

\bibliographystyle{plainnat}
\bibliography{refs}

\end{document}